\theoremstyle{plain}
\newtheorem{prop}{Proposition}
\newtheorem{thm}[prop]{Theorem}
\newtheorem{cor}[prop]{Corollary}
\newtheorem{lem}[prop]{Lemma}
\theoremstyle{remark}
\newtheorem{rem}[prop]{Remark}
\newtheorem{ex}[prop]{Example}
\theoremstyle{definition}
\newtheorem{defi}[prop]{Definition}
\def\N{\mathbb{N}}
\def\Z{\mathbb{Z}}
\def\cA{\mathcal{A}}
\def\cB{\mathcal{B}}
\def\cE{\mathcal{E}}
\def\cL{\mathcal{L}}
\def\AR{AR_\cA}
\def\eps{\varepsilon}
\DeclareMathOperator{\Card}{\#}
\DeclarePairedDelimiter{\norm}{\lVert}{\rVert}
\newcommand{\subjclass}[2][2010]{%
  \let\@oldtitle\@title%
  \gdef\@title{\@oldtitle\footnotetext{#1 \emph{Mathematics subject classification.} #2}}%
}
\newcommand{\keywords}[1]{%
  \let\@@oldtitle\@title%
  \gdef\@title{\@@oldtitle\footnotetext{\emph{Key words.} #1.}}%
}
\title{Some properties of morphic images\\of (eventually) dendric words}
\author[1]{France Gheeraert}
\affil[1]{Department of Mathematics, University of Liège, Allée de la Découverte 12 (B37), B-4000 Liège, Belgium.\\
\url{france.gheeraert@uliege.be}}
\keywords{Combinatorics on words, morphisms, dendric, eventually dendric}
\subjclass[2010]{68R15}
\date{}
\begin{document}

\maketitle

\begin{abstract}
The class of (eventually) dendric words generalizes well-studied families such as the Sturmian words, the Arnoux-Rauzy words or the codings of interval exchanges. Dendricity is also a particular case of neutrality. We show that, however, the notions of eventual dendricity and eventual neutrality coincide. This paper then focuses on two questions linking dendricity and morphisms. We first look at the evolution of the factor complexity when applying a non-erasing morphism to an eventually dendric word and show that it can only grow by an additive constant. We next generalize a result known for Sturmian words and consider the morphisms that preserve dendricity for all dendric words. We show that they correspond exactly to the morphisms generated by the Arnoux-Rauzy morphisms.
\end{abstract}

\section{Introduction}

Given a bi-infinite word, the study of the left and right extensions of its factors provides information about the factor complexity of the word~\cite{CANT_cassaigne} and the general structure of the language. Well known families such as Sturmian, Arnoux-Rauzy or neutral words can be defined using extensions.
In~\cite{acyclic}, the authors introduced the notion of dendric words under the terminology of tree words. These are defined using the notion of extension graph, a bipartite graph whose set of vertices is the disjoint union of the left and right extensions of a given factor $w$ and there is an edge between the left copy of $a$ and the right copy of $b$ if $awb$ is a factor. A bi-infinite word is dendric if the extension graphs of all its factors are trees.
This notion was later generalized to eventually dendric words in~\cite{eventually_dendric} by only requiring the extension graphs of long enough factors to be trees.

The families of dendric and of eventually dendric words both exhibit several combinatorial~\cite{bifix_decoding, rigidity, eventually_dendric} and ergodic~\cite{dimension_group, Damron_Fickenscher} properties. Some of these properties offer a first link between morphisms and dendricity. For example, these families are stable under derivation by return words and under maximal bifix decoding. Both of these operations provide examples of (eventually) dendric words whose image under a particular morphism is also (eventually) dendric. Moreover, a deeper study of the stability of dendricity under particular morphisms gives an $S$-adic characterization of dendric and eventually dendric shift spaces~\cite{general_case}.

In this paper, we delve deeper into the study of the link between morphisms and dendricity. For that, we mainly answer two questions.

The first one is the growth of the factor complexity. We improve a result of~\cite{CANT_cassaigne} in the case of eventually dendric words by proving that, when applying a non-erasing morphism to an eventually dendric word, the factor complexity can only grow by an additive constant. As the factor complexities of dendric and eventually dendric words are well known and related to the size of the alphabet in the first case and the asymptotic pairs~\cite{eventually_dendric} in the second case, this gives us additional restrictions on the image, provided that it is dendric (resp., eventually dendric). 

The second objective is to generalize a well-known result for Sturmian and Arnoux-Rauzy words. In~\cite{episturmian_survey}, the authors proved that a morphism preserves the Arnoux-Rauzy property for one (aperiodic) Arnoux-Rauzy word if and only if it preserves it for every Arnoux-Rauzy word, if and only if it is generated by the so-called Arnoux-Rauzy morphisms. 
In the case of dendric words, we lose the first equivalence. However, we prove that the only dendric preserving morphisms are still the ones generated by the Arnoux-Rauzy morphisms. This generalizes a result of~\cite{ternary} where the authors proved it for a specific sub-family of morphisms.

\bigskip

This paper is organized as follows. In Section~\ref{S:definitions}, we recall some basic notions and tools of combinatorics on words then in Section~\ref{S:dendric and neutral}, we introduce the families of words studied in this paper and prove that some of these coincide.

Section~\ref{S:factor complexity} focuses on the study of the evolution of the factor complexity when applying a morphism to an eventually neutral word. The main result here is that the factor complexity can only grow by an additive constant. For that, we use the notion of coverings of a finite word.

Finally, we consider the monoid generated by the Arnoux-Rauzy morphisms. We show in Section~\ref{S:dendric preserving} that its elements are exactly the morphisms that preserve dendricity for all dendric words. We also prove that the morphisms preserving codings of regular interval exchange transformations are trivial, except in the Sturmian case.

\section{Definitions}\label{S:definitions}

Let $\cA$ be an alphabet, i.e. a finite set of letters. Unless specified otherwise, we will always assume that the alphabets are of size at least 2. We denote by $\cA^*$ (resp., $\cA^\Z$) the set of finite (resp., bi-infinite) words with letters in the alphabet $\cA$. When we want to specify that all the letters of $\cA$ appear in a bi-infinite word $x \in \cA^\Z$, we will say that $x$ is a bi-infinite word \emph{over} $\cA$.

Given a finite word $w = w_1 \dots w_n \in \cA^n$, its \emph{length} denoted $|w|$ is the integer $n$. The \emph{empty word} is the only word of length zero and is denoted by $\eps$. The set of non-empty words in $\cA^*$ is $\cA^+$.
For a word $w \in \cA^+$, we will use the notations $w_{[i,j]}$ for $w_i \dots w_j$ and $w^\omega$ (resp., $^\omega w$) for the right-infinite word $www\dots$ (resp., the left-infinite word $\dots www$).

A finite word $u$ is a \emph{factor} of $w \in \cA^*$ if there exists $p, s \in \cA^*$ such that $w = pus$. Moreover, if $p = \eps$ (resp., $s = \eps$) we say that $u$ is a \emph{prefix} (resp., \emph{suffix}) of $w$.
We can extend the notion of factors to bi-infinite words by saying that a finite word $u$ is a \emph{factor} of $x \in \cA^\Z$ if there exist a left-infinite word $p$ and a right-infinite word $s$ such that $x = pus$. The set of factors of $x \in \cA^\Z$ is called its \emph{language} and is denoted $\cL(x)$.
The notation $\cL_{\geq n}(x)$ (resp., $\cL_{\leq n}(x)$, $\cL_n(x)$) represents the elements of $\cL(x)$ of length at least (resp., at most, exactly) $n$. The \emph{factor complexity} of $x$ is then the function $p_x : \N \to \N$ such that $p_x(n) = \Card \cL_n(x)$.

Two finite words are \emph{prefix comparable} if one is prefix of the other. A \emph{prefix code} is a set $S$ of finite words such that, for any two distinct $u, v \in S$, $u$ and $v$ are not prefix comparable. Moreover, given a bi-infinite word $x$, $S \subseteq \cL(x)$ is said to be an \emph{$x$-maximal prefix code} if it is a prefix code and every word of $\cL(x)$ is prefix comparable with (at least) one element of $S$. We similarly define the notions of suffix comparable, suffix code and $x$-maximal suffix code.

A \emph{morphism} $\sigma : \cA^* \to \cB^*$ is a monoid homomorphism between $\cA^*$ and $\cB^*$ (endowed with concatenation). Moreover, we assume that $\cB$ is minimal, i.e. every letter of $\cB$ appear in the image of some word under $\sigma$.
We almost exclusively work with \emph{non-erasing morphisms}, i.e. such that the images of the letters are not empty. In that case, we can naturally extend $\sigma$ to bi-infinite words.
The \emph{width} of a morphism $\sigma : \cA^* \to \cB^*$ is $\norm{\sigma} := \max\{|\sigma(a)| \mid a \in \cA\}$. A \emph{coding} is a non-erasing morphism of width 1. Furthermore, if the images of the letters are distinct, we say that it is a \emph{bijective coding}.

\section{Neutral and dendric words}\label{S:dendric and neutral}

Given a bi-infinite word $x \in \cA^\Z$ and one of its factor $w$, the sets of \emph{left}, \emph{right} and \emph{bi-extensions} of $w$ are defined respectively as
\begin{align*}
    E^-_x(w) &= \{a \in \cA \mid aw \in \cL(x)\}\\
    E^+_x(w) &= \{a \in \cA \mid wa \in \cL(x)\}\\
    E_x(w) &= \{(a, b) \in \cA \times \cA \mid awb \in \cL(x)\}.
\end{align*}
A word $w$ is said to be \emph{left} (resp., \emph{right}) \emph{special} if it has at least two left (resp., right) extensions. It is \emph{bispecial} if it is left and right special.

We define the \emph{multiplicity} of $w$ as
\[
    m_x(w) = \Card E_x(w) - \Card E^-_x(w) - \Card E^+_x(w) + 1.
\]
We then say that $w$ is \emph{neutral} if $m_x(w) = 0$ and it is \emph{weak} (resp., \emph{strong}) if $m_x(w) < 0$ (resp., $m_x(w) > 0$).

The multiplicity is strongly related to the \emph{first difference of complexity} $s_x(n) := p_x(n+1) - p_x(n)$. We have the following result.
\begin{prop}[Cassaigne and Nicolas~\cite{CANT_cassaigne}]\label{P:Cassaigne first difference}
Let $x \in \cA^\Z$. For all $n \geq 0$,
\begin{enumerate}
\item
    $s_x(n) = \sum_{w \in \cL_n(x)} (\Card E^+_x(w) - 1) = \sum_{w \in \cL_n(x)} (\Card E^-_x(w) - 1)$ ;
\item
    $s_x(n+1) - s_x(n) = \sum_{w \in \cL_n(x)} m_x(w)$.
\end{enumerate}
\end{prop}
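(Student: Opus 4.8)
The plan is to count factors by their extensions. First I would observe that, since $x$ is bi-infinite, every factor of $x$ admits both a left and a right extension; hence, for each $n$, the map $\cL_{n+1}(x) \to \cL_n(x)$ sending a word to its length-$n$ prefix is well defined and surjective, and its fiber over $w \in \cL_n(x)$ is exactly $\{wa \mid a \in E^+_x(w)\}$, of cardinality $\Card E^+_x(w)$. This gives $p_x(n+1) = \sum_{w \in \cL_n(x)} \Card E^+_x(w)$. The symmetric argument, truncating on the left, gives $p_x(n+1) = \sum_{w \in \cL_n(x)} \Card E^-_x(w)$. Since $p_x(n) = \Card \cL_n(x) = \sum_{w \in \cL_n(x)} 1$, subtracting yields part (1) in both of its forms.

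For part (2) I would push the same idea one step further: deleting the first and last letters gives a surjection $\cL_{n+2}(x) \to \cL_n(x)$ whose fiber over $w$ is $\{awb \mid (a,b) \in E_x(w)\}$, so $p_x(n+2) = \sum_{w \in \cL_n(x)} \Card E_x(w)$. Now $s_x(n+1) - s_x(n) = \bigl(p_x(n+2) - p_x(n+1)\bigr) - \bigl(p_x(n+1) - p_x(n)\bigr) = p_x(n+2) - 2 p_x(n+1) + p_x(n)$, and substituting the three expressions just obtained (using the $E^+_x$ form for one copy of $p_x(n+1)$ and the $E^-_x$ form for the other) lets me collect everything into a single sum over $\cL_n(x)$, whose general term is $\Card E_x(w) - \Card E^+_x(w) - \Card E^-_x(w) + 1 = m_x(w)$ by definition. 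This is exactly the claimed identity.

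The whole argument is bookkeeping, so there is no serious obstacle; the one point deserving care is the surjectivity of the truncation maps, which is precisely where bi-infiniteness of $x$ enters — no length-$n$ factor is "missed" because each extends on both sides. (For a one-sided infinite word one would pick up the usual boundary correction coming from prefixes that are not left-extendable; here that term vanishes.) I would therefore present the proof as: establish the two counting formulas for $p_x(n+1)$ and the one for $p_x(n+2)$, then do the linear combination.
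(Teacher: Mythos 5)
Your argument is correct and is the standard counting proof of this result; the paper itself gives no proof, simply citing Cassaigne and Nicolas, and your fiber-counting for the truncation maps (with the linear combination $p_x(n+2)-2p_x(n+1)+p_x(n)$ for part (2)) is exactly the argument in that reference. The one point you flag — surjectivity, i.e.\ non-emptiness of the extension sets, which holds because $x$ is bi-infinite — is indeed the only place where care is needed.
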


The left, right and bi-extensions of $w$ in $x$ can also be represented in a graph called the \emph{extension graph} of $w$ and denoted $\cE_x(w)$. It is the undirected bipartite graph whose set of vertices is the disjoint union of $E^-_x(w)$ and $E^+_x(w)$ and containing the edge $(a, b) \in E^-_x(w) \times E^+_x(w)$ if and only if $(a, b)$ is a bi-extension of $w$.

Based on this graph, we define additional families of words. A word $w$ is said to be \emph{dendric} (resp., \emph{acyclic}, \emph{connected}) if its extension graph is a tree (resp., acyclic, connected).
Remark that, if a word is not bispecial, then it is always dendric.

\begin{ex}
Let $x$ be the word $^\omega (001) \cdot (001)^\omega$. The extension graphs of the empty word and of the letter 0 are represented in Figure~\ref{F:extension graphs}. We can see that $\eps$ is dendric and 0 is acyclic. Moreover, these are the only bispecial words thus every other factor of $x$ is dendric.
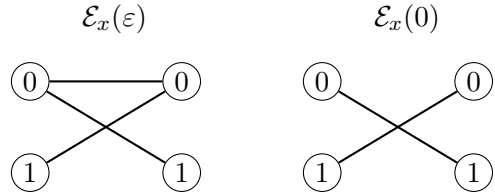
\begin{figure}[h]
    \tikzset{node/.style={circle,draw,minimum size=0.5cm,inner sep=0pt}}
    \tikzset{title/.style={minimum size=0.5cm,inner sep=0pt}}

    \begin{center}
    \begin{tikzpicture}
        \node[title](ee) {$\cE_x(\eps)$};
        \node[node](eal) [below left= 0.4cm and 0.5cm of ee] {$0$};
        \node[node](ebl) [below= 0.7cm of eal] {$1$};
        \node[node](ear) [right= 1.5cm of eal] {$0$};
        \node[node](ebr) [below= 0.7cm of ear] {$1$};
        \path[draw,thick]
        (eal) edge node {} (ear)
        (eal) edge node {} (ebr)
        (ebl) edge node {} (ear);
        \node[title](ea) [right = 3cm of ee] {$\cE_x(0)$};
        \node[node](aal) [below left= 0.4cm and 0.5cm of ea] {$0$};
        \node[node](abl) [below= 0.7cm of aal] {$1$};
        \node[node](aar) [right= 1.5cm of aal] {$0$};
        \node[node](abr) [below= 0.7cm of aar] {$1$};
        \path[draw,thick]
        (aal) edge node {} (abr)
        (abl) edge node {} (aar);
    \end{tikzpicture}
    \end{center}

    \caption{The extension graph of $\eps$ (on the left) is a tree and the extension graph of $0$ (on the right) is acyclic.}
    \label{F:extension graphs}
\end{figure}
\end{ex}

As a direct consequence of the link between the number of edges and the number of vertices in a tree, we have the following lemma.
\begin{lem}\label{L:acyclic, connected and neutral}
Let $x \in \cA^\Z$ and $w \in \cL(x)$.
\begin{enumerate}
\item 
    If $w$ is connected (resp., acyclic), then it is strong (resp., weak) or neutral.
\item
    If $w$ is connected (resp., acyclic) and neutral, then it is dendric.
\end{enumerate}
\end{lem}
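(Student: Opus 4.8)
The plan is to reduce everything to a counting identity between edges and vertices in the extension graph $\cE_x(w)$. Recall that $\cE_x(w)$ is a bipartite graph on the vertex set $E^-_x(w) \sqcup E^+_x(w)$, so it has $\Card E^-_x(w) + \Card E^+_x(w)$ vertices, while the number of edges is exactly $\Card E_x(w)$. Hence the multiplicity can be rewritten as
\[
    m_x(w) = \bigl(\text{number of edges of } \cE_x(w)\bigr) - \bigl(\text{number of vertices of } \cE_x(w)\bigr) + 1.
\]
This is the only computation needed; everything else is a standard fact about finite graphs.

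For part (1), first I would recall the elementary fact that a finite graph $G$ with $V$ vertices, $E$ edges and $c$ connected components satisfies $E \geq V - c$, with equality if and only if $G$ is a forest (acyclic), and also $E \leq \binom{V}{2}$ — but more to the point, a connected graph satisfies $E \geq V - 1$. So if $w$ is connected, then $\cE_x(w)$ has at most one connected component among its non-isolated part; more carefully, since every vertex of $\cE_x(w)$ is by definition an extension and hence lies on at least one edge, there are no isolated vertices, so connectedness gives exactly one component and $E \geq V - 1$, i.e. $m_x(w) \geq 0$, meaning $w$ is strong or neutral. Symmetrically, if $w$ is acyclic then $\cE_x(w)$ is a forest, so $E = V - c \leq V - 1$ (again using that there are no isolated vertices, so $c \geq 1$), giving $m_x(w) \leq 0$, i.e. $w$ is weak or neutral.

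For part (2), suppose $w$ is connected and neutral. By the rewriting above, $m_x(w) = 0$ means $E = V - 1$. A connected graph on $V$ vertices with exactly $V-1$ edges is a tree, so $\cE_x(w)$ is a tree and $w$ is dendric. The acyclic-and-neutral case is symmetric: $\cE_x(w)$ is a forest with $E = V - 1$; a forest with $c$ components has $E = V - c$, so $c = 1$, i.e. the forest is connected, hence a tree, so again $w$ is dendric.

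I do not expect any genuine obstacle here: the proof is essentially the observation that "tree $=$ connected $+$ acyclic $=$ connected $+$ ($E = V-1$) $=$ acyclic $+$ ($E = V-1$)" combined with the translation of $m_x(w)$ into the quantity $E - V + 1$. The only point requiring a moment's care is the bookkeeping of isolated vertices — one must note that the vertex set of $\cE_x(w)$ consists precisely of the letters that actually occur as extensions, so every vertex has degree at least one, which is what makes the component count behave as claimed. This is presumably why the authors phrase it as "a direct consequence of the link between the number of edges and the number of vertices in a tree."
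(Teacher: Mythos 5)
Your proof is correct and is essentially the paper's own argument: the paper states the lemma without proof as ``a direct consequence of the link between the number of edges and the number of vertices in a tree,'' and your write-up simply makes that explicit via $m_x(w) = E - V + 1$ together with the standard bounds $E \geq V-1$ for connected graphs and $E = V - c \leq V - 1$ for forests. The only extra detail you supply, the absence of isolated vertices (every extension of $w$ in a bi-infinite word participates in some bi-extension), is a correct and harmless bit of bookkeeping.
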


For any property $P$ among \emph{neutral}, \emph{weak or neutral}, \emph{strong or neutral}, \emph{dendric}, \emph{acyclic} and \emph{connected}, we say that a bi-infinite word $x$ is \emph{eventually} $P$ if there exists $N \geq 0$ such that the words of $\cL_{\geq N}(x)$ satisfy the property $P$. The minimal such $N$ is then called the \emph{threshold}. Furthermore, if the threshold is 0, we will drop the adverb ``eventually''.

Remark that, as a direct consequence of Proposition~\ref{P:Cassaigne first difference}, if $x \in \cA^\Z$ is a neutral bi-infinite word over $\cA$, then the factor complexity of $x$ is given by
\[
    p_x(n) = (\Card \cA - 1)n + 1.
\]

When looking at eventual properties, several notions coincide. However, the thresholds may differ, as stated in the following result.

\begin{prop}\label{P:equivalence of eventual properties}
Let $x \in \cA^\Z$. The following are equivalent:
\begin{enumerate}
\item
    $x$ is eventually dendric with threshold $N$;
\item
    $x$ is eventually acyclic with threshold $M$;
\item[2$\,'\!\!$.]
    $x$ is eventually neutral with threshold $M'$;
\item
    $x$ is eventually weak or neutral with threshold $K$.
\end{enumerate}
Moreover, $K \leq M \leq N$ and $K \leq M' \leq N$.
\end{prop}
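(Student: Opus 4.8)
The plan is to prove the cycle of implications
\[
(1) \Rightarrow (2) \Rightarrow (2') \Rightarrow (3) \Rightarrow (1),
\]
keeping track of thresholds along the way; once the implications are established, the inequalities $K \le M \le N$ and $K \le M' \le N$ will fall out automatically from the way the thresholds are produced (each implication exhibits a new threshold no larger than the hypothesized one). The easy implications are $(1)\Rightarrow(2)$ and $(1)\Rightarrow(2')$: a tree is acyclic, and a tree on $E^-_x(w) \sqcup E^+_x(w)$ has exactly $\Card E^-_x(w) + \Card E^+_x(w) - 1$ edges, i.e. $m_x(w) = 0$, so any dendric word is both acyclic and neutral with the same threshold. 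Similarly $(2)\Rightarrow(3)$ is immediate from Lemma~\ref{L:acyclic, connected and neutral}(1): an acyclic word is weak or neutral, so the acyclic threshold $M$ works as a ``weak or neutral'' threshold, giving $K \le M$.

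The substantive content is in the implications that go \emph{back up} to dendricity, namely $(3)\Rightarrow(1)$ and $(2')\Rightarrow(1)$, and this is where I expect the real work to lie. For $(3)\Rightarrow(1)$: suppose $x$ is eventually weak or neutral with threshold $K$, so $m_x(w) \le 0$ for all $w \in \cL_{\ge K}(x)$. By Proposition~\ref{P:Cassaigne first difference}(2), for each $n \ge K$ we have $s_x(n+1) - s_x(n) = \sum_{w \in \cL_n(x)} m_x(w) \le 0$, so the sequence $(s_x(n))_{n \ge K}$ is non-increasing; being a sequence of non-negative integers (by part (1), $s_x(n) = \sum_{w}(\Card E^+_x(w) - 1) \ge 0$), it is eventually constant, say equal to $c$ for all $n \ge N_0$ for some $N_0 \ge K$. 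But $s_x(n+1) = s_x(n)$ forces $\sum_{w \in \cL_n(x)} m_x(w) = 0$, and since every term is $\le 0$, every term must vanish: $m_x(w) = 0$ for all $w \in \cL_{N_0}(x)$, hence for all $w \in \cL_{\ge N_0}(x)$. So $x$ is eventually neutral with some threshold $M' \le N_0$. This already gives $(3) \Rightarrow (2')$ and $M' $ bounded in terms of $K$; the point still to be extracted is that eventual neutrality plus eventual connectedness/acyclicity would give dendricity via Lemma~\ref{L:acyclic, connected and neutral}(2) — so I still need acyclicity (or connectedness) of long factors.

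To close the loop I would argue that an eventually neutral word is in fact eventually acyclic (and then eventually dendric). The key local fact is that if $w$ is neutral but \emph{not} acyclic, then its extension graph has a cycle, hence at least one ``excess'' edge, and since the multiplicity is exactly $0$ while the graph has $\ge \Card E^-_x(w) + \Card E^+_x(w)$ edges, the graph must be disconnected; one then shows such a configuration cannot persist for arbitrarily long factors, because a cycle in $\cE_x(w)$ forces, for the relevant extensions $awb$, non-trivial structure on $\cE_x(u)$ for factors $u$ extending $w$ — concretely, one tracks how connected components and cycles of $\cE_x(w)$ relate to those of $\cE_x(aw)$ and $\cE_x(wb)$, using that $\sum_{a} m_x(aw) $ and $\sum_b m_x(wb)$ are controlled by $m_x(w)$ via Proposition~\ref{P:Cassaigne first difference}. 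The cleanest route is probably: if $x$ is eventually neutral with threshold $M'$, consider a longest factor (if any) of length $\ge M'$ whose extension graph is not acyclic, or rather show by an extension argument that non-acyclicity at some long length propagates downward or upward and contradicts a counting identity; combined with neutrality, Lemma~\ref{L:acyclic, connected and neutral}(2) then yields dendricity for all sufficiently long factors, with a threshold $N$ bounded by $M'$ (hence by $M$ and by $K$). The main obstacle is exactly this last step — turning the purely numerical ``eventually neutral'' into the graph-theoretic ``eventually acyclic'' — and I would expect it to require a careful bispecial-factor/extension analysis rather than a one-line counting argument; it is plausible the paper isolates this as the crux and handles it with the machinery already hinted at (return words, bifix codes) or with a direct induction on factor length.
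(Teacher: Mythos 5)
Your treatment of the easy implications is fine, and your derivation of ``eventually weak or neutral $\Rightarrow$ eventually neutral'' via Proposition~\ref{P:Cassaigne first difference} (for $n \ge K$ the sequence $s_x(n)$ is non-increasing and non-negative, hence eventually constant, which forces all multiplicities at those lengths to vanish) is correct; it is a clean counting argument that the paper does not even isolate as a separate step. The threshold inequalities $K \le M \le N$ and $K \le M' \le N$ do indeed follow from the easy directions alone, as you note.

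However, there is a genuine gap at exactly the point you flag yourself: the implication from eventual neutrality (or eventual weak-or-neutrality) back to eventual dendricity is never actually proved. You correctly observe that a neutral, non-acyclic factor must have a disconnected extension graph, but the assertion that ``such a configuration cannot persist for arbitrarily long factors'' is left as a menu of possible strategies (propagating cycles, counting identities, bispecial analysis), none of which is carried out --- and this is not a routine verification; it is the entire content of the proposition. The paper's argument is a pigeonhole/compactness argument of a rather different flavour: assuming the set $W$ of weak-or-neutral but non-dendric factors is infinite, it picks $u \in \cL_{\geq K}(x)$ a prefix of infinitely many elements of $W$ with $\Card E^-_x(u)$ \emph{minimal} among such prefixes. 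Minimality forces $E^-_x(ua) = E^-_x(u)$ for some right extension $a$, so $\cE_x(u)$ is connected; being weak or neutral and connected, $u$ is neutral and dendric by Lemma~\ref{L:acyclic, connected and neutral}, and $a$ is then the unique right extension with $ua$ left special. Since every element of $W$ is bispecial, hence left special, every element of $W$ extending $u$ must in fact extend $ua$; iterating pushes the elements of $W \cap u\cA^*$ to arbitrary length, contradicting the fact that they are finite words. This mechanism --- choosing a prefix with minimal left extension set and exploiting uniqueness of the left-special continuation --- is absent from your proposal, so the crux of the proposition is missing.
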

\begin{proof}
By Lemma~\ref{L:acyclic, connected and neutral}, it only remains to prove that, if $x$ is eventually weak or neutral, then it is eventually dendric.

Assume that $x$ is eventually weak or neutral of threshold $K$ but not eventually dendric. Thus, there exist an infinite number of words of $\cL(x)$ which are weak or neutral but not dendric. Let $W$ denote the set of these words and let $u \in \cL_{\geq K}(x)$ be a prefix of an infinite number of elements of $W$. Assume also that $\Card E^-_x(u)$ is minimal among such words.

There exists a right extension $a$ such that $ua$ is a prefix of an infinite number of words of $W$ and, by hypothesis on $u$, $E^-_x(ua) = E^-_x(u)$. In particular, $u$ is then connected. Since it is weak or neutral, it is neutral and dendric by Lemma~\ref{L:acyclic, connected and neutral} and $a$ is the unique right extension of $u$ such that $ua$ is left special.
As the elements of $W$ are not dendric, they are bispecial thus left-special and therefore cannot begin with $ub$, $b \ne a$. Since, $u$ is not in $W$,
we deduce that
\[
    W \cap u\cA^* = W \cap ua\cA^*.
\]
Iterating the reasoning, for each $n$, we can find a word $v^{(n)}$ of length $n$ such that
\[
    W \cap u\cA^* = W \cap uv^{(n)}\cA^*
\]
thus the elements of $W \cap u\cA^*$ are of length at least $|u| + n$ for all $n$. As $W \cap u\cA^*$ contains finite words, this is a contradiction.
\end{proof}

Some well studied families of words are particular examples of dendric words. It is the case of the Sturmian words, the Arnoux-Rauzy words and of the codings of regular interval exchange transformations. This last family will be useful in this paper. We recall here its definition.

Given two total orders $\leq, \preceq$ on $\cA = \{a_1, \dots, a_k\}$ and $k$ lengths $\lambda_{a_1}, \dots, \lambda_{a_k} > 0$ such that $\sum_{i = 1}^k \lambda_i = 1$, the associated \emph{interval exchange transformation} is the bijective map $T : [0,1[ \to [0,1[$ such that
\[
	T(z) = z - \sum_{a_j < a_i} \lambda_{a_j} + \sum_{a_j \prec a_i} \lambda_{a_j} \quad \text{if } z \in \left[\sum_{a_j < a_i} \lambda_{a_j}, \sum_{a_j \leq a_i} \lambda_{a_j}\right[.
\]
In other words, for all $1 \leq i \leq k$, it maps the interval $I_{a_i} := \left[\sum_{a_j < a_i} \lambda_{a_j}, \sum_{a_j \leq a_i} \lambda_{a_j}\right[$ to the interval $J_{a_i} := \left[\sum_{a_j \prec a_i} \lambda_{a_j}, \sum_{a_j \preceq a_i} \lambda_{a_j}\right[$.
If, moreover, the orbits (under $T$) of the non-zero $\sum_{a_j < a_i} \lambda_{a_j}$, $i \in \{1, \dots, k\}$, are infinite and disjoint, then we say that $T$ is a \emph{regular interval exchange transformation}, or RIET for short. In that case, the orbit of each $z \in [0,1[$ is dense~\cite{Keane}.

For $z \in [0,1[$, its \emph{(natural) coding} is the bi-infinite word $x \in \cA^\Z$ such that, for all $n \in \Z$, $x_n = a_i$ if and only if $T^n(z)$ is in the interval $I_{a_i}$.
Observe in particular that, if $T$ is an RIET, then $ab \in \cL(x)$ if and only if $J_a \cap I_b \ne \emptyset$.

It is also sometimes interesting to use the following combinatorial characterization of codings of regular interval exchange transformations.

\begin{thm}[Ferenczi-Zamboni~\cite{Ferenczi_Zamboni}, Gheeraert-Lejeune-Leroy~\cite{ternary}]
\label{T:Fer-Zam}
A word $x$ over $\cA$ is the natural coding of a regular interval exchange transformation with the pair of orders $\binom{\leq}{\preceq}$ if and only if it is recurrent and it satisfies the following conditions for every $w \in \cL(x)$:
\begin{enumerate}
\item\label{item:IE non crossing edges}
	for all $(a_1, b_1), (a_2, b_2) \in E_x(w)$, if $a_1 \prec a_2$, then $b_1 \leq b_2$;
\item\label{item:IE intersection}
	for all $a_1, a_2 \in E^-_x(w)$, if $a_1, a_2$ are consecutive for $\preceq$, then $E^+_x(a_1w) \cap E^+_x(a_2w)$ is a singleton.
\end{enumerate}
Moreover, up to symmetry, $\binom{\leq}{\preceq}$ is the only pair of orders satisfying these properties for all $w \in \cL(x)$.
\end{thm}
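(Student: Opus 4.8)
The plan is to prove the two implications and the uniqueness statement separately, organising everything around the following elementary remark: if $x$ is the natural coding of an RIET $T$, then for every $w\in\cL(x)$ the cylinder $I_w=\{z : T^k(z)\in I_{w_{k+1}}\text{ for }0\le k<|w|\}$ is a non-degenerate interval, and $T^{|w|}$ restricted to $I_w$ is the restriction of a single translation, hence an orientation-preserving bijection of $I_w$ onto an interval $I'_w:=T^{|w|}(I_w)$.

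For the direct implication, recurrence (indeed minimality) is immediate, since every $T$-orbit is dense, so the coded point enters each $I_w$ at arbitrarily large positive and negative times. For the two combinatorial conditions I would first record the dictionary $T(I_{aw})=J_a\cap I_w$ (which follows from $T(I_a)=J_a$) and push it through $T^{|w|}$: for a left extension $a$ of $w$, $E^+_x(aw)$ is exactly the set of letters $b$ with $I_b\cap T^{|w|}(J_a\cap I_w)\ne\emptyset$, where the intervals $T^{|w|}(J_a\cap I_w)$, as $a$ runs over $E^-_x(w)$ in $\preceq$-order, are consecutive subintervals of $I'_w$, while the $I_b$ are in $\leq$-order. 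Condition~\ref{item:IE non crossing edges} is then just order-preservation. For condition~\ref{item:IE intersection}, two $\preceq$-consecutive left extensions $a_1,a_2$ of $w$ correspond to $\preceq$-consecutive atoms $J_{a_1},J_{a_2}$ (because $I_w$ is an interval), so $T^{|w|}(J_{a_1}\cap I_w)$ and $T^{|w|}(J_{a_2}\cap I_w)$ abut at the single point $q:=T^{|w|}(p)$, with $p$ the common endpoint of $J_{a_1}$ and $J_{a_2}$; the common right extensions of $a_1w$ and $a_2w$ are exactly the $b$ with $q$ in the interior of $I_b$, and there is precisely one such $b$ as long as $q$ is not a discontinuity of $T$. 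This is where regularity is used: $p$ is a discontinuity of $T^{-1}$, hence the image under $T$ of a discontinuity of $T$, so $q$ lies on the forward $T$-orbit of a discontinuity of $T$, which Keane's condition keeps away from the discontinuities.

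For the converse, I would first show that conditions~\ref{item:IE non crossing edges}--\ref{item:IE intersection} already force $x$ to be dendric. Writing $E^-_x(w)=\{a_1\prec\cdots\prec a_p\}$, condition~\ref{item:IE non crossing edges} (together with recurrence, which makes every right extension of $w$ left extendable) forces each $E^+_x(a_iw)$ to be an interval of $(E^+_x(w),\leq)$ with $\max E^+_x(a_iw)\le\min E^+_x(a_{i+1}w)$, condition~\ref{item:IE intersection} forces consecutive such intervals to overlap in exactly one letter, and recurrence again makes them non-empty and jointly cover $E^+_x(w)$; counting then gives $|E^-_x(w)|+|E^+_x(w)|-1$ edges in the connected graph $\cE_x(w)$, so it is a tree. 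With dendricity in hand (hence the exact complexity $(\Card \cA - 1)n + 1$ and good control over special factors), the remaining task is to build the RIET: construct inductively a refining sequence of partitions of $[0,1[$ into $\Card \cL_n(x)$ half-open intervals indexed by $\cL_n(x)$ together with the piecewise translations realising the dynamics, using conditions~\ref{item:IE non crossing edges}--\ref{item:IE intersection} to supply the combinatorial consistency at each step (non-crossing makes the successive Rauzy graphs planar, so the left-to-right orderings of the $\cL_n(x)$ cohere), choose the interval lengths via a compactness argument so that the nested intervals attached to the tails of $x$ shrink to points, and pass to the limit; recurrence then forces the discontinuities of the resulting map to have infinite disjoint orbits, i.e.\ it is a genuine regular IET with orders $\binom{\leq}{\preceq}$, of which $x$ is a coding. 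I expect this last construction, and in particular the bookkeeping that keeps the limiting transformation from degenerating, to be the main obstacle.

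Finally, for the uniqueness, suppose $\binom{\leq'}{\preceq'}$ also satisfies the conditions for every $w\in\cL(x)$. By the converse, $x$ is then the coding of an RIET $T'$ with these orders; since $T$ and $T'$ share the same coding language and $x$ is a common orbit, the isomorphism of their symbolic models transports to a monotone conjugacy of $T$ and $T'$, which is either increasing, forcing $\binom{\leq'}{\preceq'}=\binom{\leq}{\preceq}$, or decreasing, forcing the reversed pair via $z\mapsto 1-z$; this is the asserted symmetry. Alternatively, and more combinatorially, one can read the orders off the extension graphs directly: $\cE_x(\varepsilon)$ has every letter on both sides, condition~\ref{item:IE non crossing edges} forces it to be a caterpillar (no vertex can have three non-leaf neighbours in a non-crossing two-layer drawing), whose two-layer non-crossing drawing is unique up to reflection and the reordering of leaves at a common vertex, and following longer and longer bispecial words removes that last ambiguity, leaving only the global reversal.
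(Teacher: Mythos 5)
This theorem is quoted in the paper from the literature (Ferenczi--Zamboni and Gheeraert--Lejeune--Leroy) and no proof of it appears in the text, so there is nothing internal to compare your argument against; I can only assess your outline on its own terms. The direct implication is the standard argument and your sketch of it is essentially right: the dictionary $T(I_{aw})=J_a\cap I_w$, the fact that $T^{|w|}$ acts on the cylinder $I_w$ as a single translation, order-preservation for condition~\ref{item:IE non crossing edges}, and Keane's condition to keep $q=T^{|w|}(p)$ away from the endpoints of the $I_b$ for condition~\ref{item:IE intersection}. The derivation of dendricity from conditions~\ref{item:IE non crossing edges}--\ref{item:IE intersection} (the $E^+_x(a_iw)$ form consecutive $\leq$-intervals overlapping in single letters, hence $\cE_x(w)$ is a planar tree) is also correct and is exactly the reason codings of RIET are dendric.

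The genuine gap is where you yourself locate it: the construction of the IET in the converse, which is the actual content of the theorem, and ``choose the interval lengths via a compactness argument'' is not a construction. What is needed is an invariant probability measure $\mu$ on the orbit closure of $x$, with $\lambda_a=\mu([a])$ and, more generally, $\mu$ of cylinders defining the nested interval partitions; the consistency relations $\lambda_w=\sum_a\lambda_{wa}$ that your refining partitions must satisfy are precisely invariance of $\mu$, and no amount of compactness on the Rauzy-graph cones substitutes for exhibiting such a measure and showing it is positive on every cylinder (this is where mere recurrence of $x$, as opposed to uniform recurrence, has to be upgraded, e.g.\ by first proving minimality of the orbit closure from the tree condition). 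One must then verify Keane's condition for the resulting exchange --- this does not follow from ``recurrence forces the discontinuities to have infinite disjoint orbits'' as you assert, but rather from aperiodicity together with the exact complexity $p_x(n)=(\Card\cA-1)n+1$: a coincidence of separation orbits at time $n$ would merge two special factors and force $s_x(n)<\Card\cA-1$. Finally one must check that $x$ itself, and not merely its language, is the coding of an actual point of the exchange. Your uniqueness paragraph is acceptable in spirit (the caterpillar observation for $\cE_x(\eps)$ is correct), but it leans on the converse, so it inherits the same gap. As it stands the proposal is a plausible road map for the hard direction, not a proof of it.
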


In other words, $x$ is the coding of an RIET if and only if $x$ is recurrent, dendric and, in every extension graph, if the left extensions are placed on a line with respect to $\preceq$ and the right extensions are placed on a parallel line according to the order $\leq$, then the edges can be drawn as straight non-crossing segments. We will say that such a graph is \emph{planar} for $(\preceq, \leq)$.

Finally, to generate counter-examples, we will need the following result.

\begin{lem}\label{L:counter-examples with IET}
Let $\cA$ be an alphabet, let $G$ be a bipartite graph with $\cA$ as its set of left (resp., right) vertices, and let $\leq, \preceq$ be two total orders on $\cA$ such that, for all $i \in \{1, \dots, \Card \cA - 1\}$, the sets of the $i$ smallest elements for $\leq$ and for $\preceq$ are different.
If $G$ is connected and planar for $(\preceq, \leq)$, then there exists a natural coding $x$ of a regular interval exchange transformation associated with the orders $\leq$ and $\preceq$ such that $G = \cE_x(\eps)$.
\end{lem}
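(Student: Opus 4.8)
I would realize $G$ as the ``overlap graph'' of an interval exchange. Recall that for an RIET $T$ with domain intervals $(I_a)_{a\in\cA}$ (ordered by $\le$) and range intervals $(J_a)_{a\in\cA}$ (ordered by $\preceq$), every natural coding $x$ satisfies $\cL_1(x)=\cA$ and $ab\in\cL(x)\iff J_a\cap I_b\ne\emptyset$; hence $\cE_x(\eps)$ is exactly the bipartite graph on $\cA\sqcup\cA$ with an edge between the left copy of $a$ and the right copy of $b$ whenever $J_a\cap I_b\ne\emptyset$, and this graph is drawn planarly for $(\preceq,\le)$. So it suffices to produce lengths $(\lambda_a)_{a\in\cA}$ with $\lambda_a>0$ and $\sum_a\lambda_a=1$ such that the resulting $I$- and $J$-partitions of $[0,1[$ have overlap graph exactly $G$, and such that the nonzero left endpoints $\sum_{a_j<a_i}\lambda_{a_j}$ have infinite, pairwise disjoint $T$-orbits; by Keane's criterion the latter makes $T$ regular, and it does not change $\cL_2(x)$, hence not $\cE_x(\eps)$.

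First I would extract the combinatorial shape of $G$. Since $\Card\cA\ge2$ and $G$ is connected, $G$ has no isolated vertex; together with planarity for $(\preceq,\le)$ this forces the neighbourhood of each vertex to be an interval for the relevant order, and $G$ to be a ``staircase'': writing the left vertices in $\preceq$-order as $a_{\tau(1)}\prec\cdots\prec a_{\tau(k)}$ and the right vertices in $\le$-order as $a_1<\cdots<a_k$, the edges of $G$ are the vertices of a monotone lattice path from $(a_{\tau(1)},a_1)$ to $(a_{\tau(k)},a_k)$ in which consecutive edges differ by passing to the next left vertex or to the next right vertex. In particular $\Card E(G)=2\Card\cA-1$, $G$ is a tree, and reading the path off prescribes the order in which the $\le$-breakpoints $p_1<\cdots<p_{k-1}$ and the $\preceq$-breakpoints $q_1<\cdots<q_{k-1}$ must interleave inside $(0,1)$.

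Next I would set up and solve the linear system. Writing $\mu_0,\dots,\mu_{2k-2}>0$ for the lengths of the successive pieces $I_b\cap J_a$ cut out along $[0,1[$ (one per edge of the staircase), the constraint that the $I$- and $J$-partitions carry matching lengths $\lambda_a=|I_a|=|J_a|$ becomes a homogeneous linear system $A\mu=0$, where the row indexed by $a$ equates the sum of the $\mu_t$ over the edges incident to $a$ as a right vertex with the sum over the edges incident to $a$ as a left vertex; this system has rank at most $\Card\cA-1$. Any strictly positive solution $\mu$ (rescaled to $\sum_t\mu_t=1$) then determines $(\lambda_a)$ and a pair of partitions whose overlap graph is, by the interleaving prescription, exactly $G$. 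By a theorem of the alternative (Stiemke's lemma), such a solution fails to exist precisely when there is a non-constant $y\colon\cA\to\R$ with $y(b)\ge y(a)$ for every edge $(a,b)$ of $G$. This is where the hypothesis on the orders enters: I would show that such a $y$, read along the staircase, produces an index $i\in\{1,\dots,\Card\cA-1\}$ for which the $i$ smallest elements for $\le$ and for $\preceq$ coincide — equivalently a forced common breakpoint splitting the staircase — contradicting both the order hypothesis and the connectedness of $G$. Finally, the set of strictly positive solutions of $A\mu=0$ on the simplex is a nonempty relatively open subset of an affine space of dimension $\ge\Card\cA-1\ge1$, so inside it the $\lambda_a$ can be perturbed to be rationally independent, keeping $\mu>0$ and hence the overlap graph equal to $G$, while making $T$ regular; the associated coding $x$ then satisfies $\cE_x(\eps)=G$.

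The main obstacle is the strict feasibility step, i.e.\ converting ``$G$ connected together with the $i$-smallest-sets all distinct'' into the non-existence of the monotone labelling $y$ (equivalently, the absence of a forced vanishing overlap $\mu_t=0$). I expect this is cleanest either by an induction on $\Card\cA$ that peels off the leaf of the staircase tree sitting at the first breakpoint — reducing to a smaller alphabet while preserving the order condition, with the Sturmian case $\Card\cA=2$ (where $\preceq$ is forced to be the reverse of $\le$, and the staircase has three edges) as the base — or by a direct maximum-principle argument propagating the values of $y$ along the rows and columns of the staircase.
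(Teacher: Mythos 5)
Your proposal follows the same route as the paper's (very terse) proof: reduce to finding positive lengths $(\lambda_a)_{a\in\cA}$ whose two families of partial sums interleave in the pattern prescribed by the staircase $G$, then perturb to satisfy Keane's condition. Your structural analysis of $G$ (a monotone lattice path with $2\Card\cA-1$ edges, hence a tree) and the reduction of feasibility to Stiemke's lemma are both correct. The gap is exactly the step you flag as the main obstacle: the claim that a non-constant labelling $y$ with $y(b)\ge y(a)$ on every edge forces a common initial segment for $\leq$ and $\preceq$. This is false, and the feasibility step cannot be repaired, because the hypothesis on the orders is not sufficient. Take $\cA=\{1,2,3\}$ with $1<2<3$ and $3\prec 2\prec 1$ (so the order hypothesis holds), and let $G$ consist of the edges (left, right) $(3,1),(3,2),(2,2),(1,2),(1,3)$; in rank coordinates this is the monotone path $(1,1),(1,2),(2,2),(3,2),(3,3)$, so $G$ is connected and planar for $(\preceq,\leq)$. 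Your balance equation at the letter $2$ reads $\mu_{32}+\mu_{22}+\mu_{12}=\mu_{22}$, forcing $\mu_{32}=\mu_{12}=0$; equivalently, the prescribed interleaving requires $\lambda_1<\lambda_3$ (from $J_3\cap I_2\neq\emptyset$) and simultaneously $\lambda_3+\lambda_2<\lambda_1+\lambda_2$ (from $J_1\cap I_2\neq\emptyset$), a contradiction. The Stiemke certificate is $y(1)=y(3)=0$, $y(2)=1$: it is monotone on every edge and non-constant, yet its level set $\{1,3\}$ is an initial segment for neither order, so neither the order hypothesis nor connectedness is contradicted.

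In fact this shows the lemma is false as stated for $\Card\cA\ge 3$: the realizable staircases are exactly the chambers of the arrangement of the hyperplanes equating an $I$-breakpoint with a $J$-breakpoint, and for $\Card\cA=3$ with the orders above there are only $4$ chambers against $6$ connected planar graphs. The paper's own proof asserts the offending step without justification (``as $G$ is planar, it is possible to find lengths that satisfy these restrictions''), so you have correctly located the crux; but your proposed resolution, like the paper's, does not go through. What the order hypothesis actually guarantees is only that no breakpoint coincidence is \emph{forced}, i.e.\ that \emph{some} connected planar $G$ is realizable for the given pair of orders. The applications in the paper (Proposition~\ref{P:at most one antecedent for each letter}, Lemma~\ref{L:sp is empty}, Theorem~\ref{T:stronger version of the results}) only need to realize a prescribed four-edge zigzag, or to avoid a single prescribed edge while keeping its endpoints connected, with the orders at one's disposal; these weaker statements can be proved directly by exhibiting lengths. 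A correct write-up should either restrict the lemma to such configurations or add the realizability of the prescribed interleaving as a hypothesis.
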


\begin{proof}
The edges of $G$ give us the factors of length $2$ and thus, restrictions on the lengths of the intervals. As $G$ is planar, it is possible to find lengths that satisfy these restrictions and, using the condition on the orders, we can choose them such that the corresponding interval exchange transformation (with orders $\leq$ and $\preceq$) is regular. It then suffices to take $x$ as the coding of any point for this transformation. The graph $\cE_x(\eps)$ will contain all the edges of $G$, and exactly those as $G$ is connected and $\cE_x(\eps)$ is a tree.
\end{proof}

\section{Factor complexity}\label{S:factor complexity}

In~\cite{CANT_cassaigne}, Cassaigne and Nicolas proved that, when applying a non-erasing morphism, the complexity grows at most by a multiplicative constant. In this section, we refine this result to prove that, if the initial bi-infinite word is eventually neutral, then the complexity grows at most by an additive constant.

For the proof, we will use the notion of covering.
\begin{defi}
Let $\sigma : \cA^* \to \cB^*$ be a non-erasing morphism.
A \emph{covering} of a non empty word $u \in \cB^+$ is a pair $(w, k) \in \cA^+ \times \Z_{\geq 0}$ such that $u = \sigma(w)_{[k+1,k+|u|]}$ and $w$ is minimal, i.e.
\begin{enumerate}
\item
    $k + 1 \leq |\sigma(w_1)|$ and
\item
    $k + |u| > \left|\sigma(w_{[1,|w| - 1]})\right|$.
\end{enumerate}
\end{defi}

For a bi-infinite word $x$ and for $n > 0$, we define by $C_{x,\sigma}(n)$ the set of coverings $(w, k)$ of words of length $n$ such that $w \in \cL(x)$. We then denote $c_{x, \sigma}(n) = \Card C_{x, \sigma}(n)$. As we usually consider only one morphism at a time, we will drop the subscript $\sigma$.

\begin{ex}
Let $\sigma$ be such that $\sigma : a \mapsto ab, b \mapsto abb$. The coverings of $babb$ are given by $(ab, 1)$ and $(bb, 2)$. If $x$ is a Sturmian word over the alphabet $\{a, b\}$ containing the factor $aa$ (and therefore, not the factor $bb$), we have $(ab, 1) \in C_x(4)$ and $(bb, 2) \not \in C_x(4)$.
\end{ex}

\begin{rem}\label{R:coverings of letters}
If $(w, k)$ is a covering of a letter $a$, then by minimality, $w$ is also a letter and $k$ can then vary between $0$ and $|\sigma(w)| - 1$ thus
\[
    c_x(1) = \sum_{a \in \cA} |\sigma(a)|,
\]
independently of the bi-infinite word $x$ over $\cA$.
\end{rem}

The following lemma gives a trivial link between the number of coverings and the factor complexity of the image.

\begin{lem}\label{L:complexity bounded by coverings}
For all bi-infinite word $x \in \cA^\Z$ and non-erasing morphism $\sigma : \cA^* \to \cB^*$, we have
\[
    p_{\sigma(x)}(n) \leq c_x(n).
\]
\end{lem}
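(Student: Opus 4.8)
The plan is to exhibit a surjection (or at least a surjective-on-the-relevant-set map) from $C_x(n)$ onto $\cL_n(\sigma(x))$, so that the inequality on cardinalities follows immediately. Concretely, I would define a map $\Phi \colon C_x(n) \to \cL_n(\sigma(x))$ by $\Phi(w,k) = \sigma(w)_{[k+1,k+n]}$, which is exactly the word of length $n$ that the covering $(w,k)$ is a covering \emph{of}. The first thing to check is that $\Phi$ is well-defined: given $(w,k) \in C_x(n)$ we have by definition $w \in \cL(x)$ and $\sigma(w)_{[k+1,k+n]}$ is a factor of $\sigma(w)$; since $w$ is a factor of the bi-infinite word $x$ and $\sigma$ is non-erasing, $\sigma(w)$ is a factor of $\sigma(x)$, hence so is any of its length-$n$ factors. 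Thus $\Phi(w,k) \in \cL_n(\sigma(x))$.

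The second and only substantive step is to check that $\Phi$ is surjective. Take any $u \in \cL_n(\sigma(x))$. By definition of $\sigma(x)$ there is an occurrence of $u$ inside $\sigma(x)$; reading off the letters of $x$ whose images overlap this occurrence, one obtains a finite factor $w \in \cL(x)$ and an offset $k \geq 0$ with $u = \sigma(w)_{[k+1,k+n]}$. Trimming $w$ on the left and right until conditions (1) and (2) of the definition of covering hold — i.e. discarding any leading letter of $w$ whose image lies entirely to the left of $u$, and any trailing letter whose image lies entirely to the right — yields a pair $(w',k')$ that is genuinely a covering of $u$ with $w' \in \cL(x)$, so $(w',k') \in C_x(n)$ and $\Phi(w',k') = u$. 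Hence $\Phi$ is onto and $p_{\sigma(x)}(n) = \Card \cL_n(\sigma(x)) \leq \Card C_x(n) = c_x(n)$.

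I do not anticipate a real obstacle here; the lemma is, as the text says, a triviality. The only point requiring a modicum of care is the trimming argument establishing surjectivity — one must make sure that after removing the ``overhanging'' letters the word $w'$ is still non-empty (which it is, since $u$ is non-empty and $\sigma$ is non-erasing, so at least one letter of $x$ has its image meeting $u$) and still satisfies the minimality conditions (1)–(2) exactly. Since the map goes the ``right'' way (from coverings to factors of the image, not the other way), no injectivity claim is needed, so there is nothing to prove about distinct coverings possibly yielding the same word. I would write this up in a few lines: define $\Phi$, note well-definedness in one sentence, prove surjectivity by the trimming construction in two or three sentences, and conclude.
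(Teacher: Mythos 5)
Your argument is exactly the paper's: it defines the same map $(w,k)\mapsto\sigma(w)_{[k+1,k+n]}$ from $C_x(n)$ to $\cL_n(\sigma(x))$ and concludes by surjectivity, which the paper justifies in one line by noting that every factor of length $n$ of $\sigma(x)$ admits at least one covering. Your trimming construction simply makes that one-line justification explicit, so the proof is correct and essentially identical.
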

\begin{proof}
The map $c : C_x(n) \to \cL_n(\sigma(x)), \quad (w, k) \mapsto \sigma(w)_{[k+1,k+n]}$ is well defined and is surjective since each element of $\cL_n(\sigma(x))$ has at least one covering in $C_x(n)$.
\end{proof}

Thus, instead of directly bounding the complexity of the image, it suffices to bound the number of coverings. We will need the following lemma.

\begin{lem}\label{L:reduction of maximal suffix code}
Let $x$ be an eventually strong (resp., weak) or neutral word with threshold $N$. For any $x$-maximal suffix code $W \subseteq \cL_{\geq K}(x) \cap \cL_{\leq M}(x)$ with $K \geq N$, we have
\[
    s_x(K) \leq \sum_{w \in W} (\Card E^+_x(w) - 1) \leq s_x(M)
\]
\[
    \left(\text{resp., } s_x(K) \geq \sum_{w \in W} (\Card E^+_x(w) - 1) \geq s_x(M)\right).
\]
\end{lem}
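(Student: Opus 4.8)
The plan is to relate the sum $\sum_{w \in W}(\Card E^+_x(w) - 1)$ to the first-difference-of-complexity function $s_x$ by using Proposition~\ref{P:Cassaigne first difference}(1), which says that $s_x(n) = \sum_{w \in \cL_n(x)}(\Card E^+_x(w) - 1)$, together with the second item, which controls how $s_x$ changes as we pass from length $n$ to length $n+1$ via the sum of multiplicities $m_x(w)$. Since $W$ is an $x$-maximal suffix code, every sufficiently long factor of $x$ has a unique suffix in $W$; conversely, $W$ ``sits between'' the levels $\cL_K(x)$ and $\cL_M(x)$. The idea is to prove the result by induction on the ``spread'' of $W$, i.e. on $M - K$ where $M = \max_{w \in W}|w|$ and $K = \min_{w \in W}|w|$, reducing to the base case where all elements of $W$ have the same length.

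**Base case.** When $W \subseteq \cL_n(x)$ for a single length $n$ with $n \geq N$, maximality of the suffix code forces $W = \cL_n(x)$: indeed any word of $\cL_n(x)$ is suffix-comparable to some $w \in W$, and since both have length $n$ they must be equal. Hence $\sum_{w\in W}(\Card E^+_x(w)-1) = s_x(n)$ by Proposition~\ref{P:Cassaigne first difference}(1). It remains to observe that in the eventually-strong-or-neutral case $s_x$ is non-decreasing past the threshold (because $s_x(n+1)-s_x(n) = \sum_{w\in\cL_n(x)} m_x(w) \geq 0$ for $n \geq N$ by Proposition~\ref{P:Cassaigne first difference}(2)), so $s_x(K) \leq s_x(n) \leq s_x(M)$; symmetrically $s_x$ is non-increasing in the weak-or-neutral case, giving the reversed chain of inequalities.

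**Inductive step.** Suppose some $w_0 \in W$ has length $|w_0| > K$ (so $W$ is not level). Replace $w_0$ in $W$ by the set $\{a w_0 \mid a \in E^-_x(w_0)\}$ of its left extensions; call the new set $W'$. One checks $W'$ is still an $x$-maximal suffix code, now contained in $\cL_{\geq K}(x) \cap \cL_{\leq M}(x)$ still — wait, the minimum length may have dropped, but only toward $K$ is relevant; in fact to control the minimum length we should instead pick a $w_0$ of \emph{maximal} length $M$ and \emph{extend} it, or pick one of length $> K$ and \emph{shorten} it by taking the (unique, since it is in the code) extension relation. The clean choice: if $\min_{w\in W}|w| < M$, take $w_0 \in W$ with $|w_0|$ minimal; since it is a proper suffix of longer elements' potential continuations, maximality lets us replace $w_0$ by its left extensions $\{aw_0 : a \in E^-_x(w_0)\}$, strictly increasing the minimal length while keeping $x$-maximality and keeping all lengths in $[K,M]$. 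The key computation is that this replacement changes $\sum_w(\Card E^+_x(w)-1)$ by exactly
\[
    \sum_{a \in E^-_x(w_0)} (\Card E^+_x(aw_0) - 1) - (\Card E^+_x(w_0) - 1) = m_x(w_0) - \Card E^-_x(w_0) + 1 + \Card E^-_x(w_0) - 1
\]
— more precisely, using $\Card E_x(w_0) = \sum_{a \in E^-_x(w_0)} \Card E^+_x(aw_0)$, one gets that the change equals $m_x(w_0) \geq 0$ in the strong-or-neutral case (resp.\ $\leq 0$ in the weak case), since $|w_0| \geq K \geq N$. So each reduction step moves the sum monotonically in the right direction; iterating until the code is level reduces to the base case. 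An entirely symmetric argument (replacing a word of maximal length by itself is not possible, so instead one runs the same reduction ``from the top'': starting from $\cL_K(x)$ and repeatedly extending the shortest word until reaching $W$) gives the two-sided bound $s_x(K) \leq \sum_{w\in W}(\ldots) \leq s_x(M)$.

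**Main obstacle.** The delicate point is the bookkeeping in the inductive step: verifying that replacing a shortest codeword by its left extensions genuinely yields an $x$-maximal suffix code (it does, because the left extensions of $w_0$ partition the words of $\cL(x)$ having $w_0$ as a suffix among those long enough, and the remaining codewords are unaffected), and correctly identifying the change in the sum as $m_x(w_0)$ via the identity $\Card E_x(w_0) = \sum_{a \in E^-_x(w_0)}\Card E^+_x(aw_0)$. One must also make sure the process terminates — it does, since the multiset of lengths strictly increases in the dominance order and is bounded by $M$ — and that the single-length endpoint really is all of $\cL_n(x)$. Once these combinatorial facts about suffix codes are in hand, the monotonicity of $s_x$ beyond the threshold (immediate from Proposition~\ref{P:Cassaigne first difference}) closes the argument.
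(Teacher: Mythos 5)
Your proof is correct and follows essentially the same route as the paper: level the maximal suffix code one step at a time, using the fact that replacing a codeword $w_0$ by its left extensions $\{aw_0 : a \in E^-_x(w_0)\}$ changes $\sum_w (\Card E^+_x(w)-1)$ by exactly $m_x(w_0)$, whose sign is controlled because $|w_0| \geq K \geq N$. The one spot deserving a word more care is your lower bound, where ``growing $\cL_K(x)$ up to $W$'' tacitly uses that $W$ is reachable by such extensions (true, by $x$-maximality of $W$, but worth stating); the paper instead runs the deterministic inverse process, shortening $W$ down to $\cL_K(x)$ by stripping first letters, which sidesteps that verification.
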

\begin{proof}
We prove the result when $x$ is eventually strong or neutral. The other case is similar. Let us first look at the left inequality and let $k = \max\{|w| \mid w \in W\}$.
If $k = K$, then $W \subseteq \cL_K(x)$ and, as $W$ is $x$-maximal, we must have the equality. Therefore,
\[
    \sum_{w \in W} (\Card E^+_x(w) - 1) = s_x(K)
\]
by Proposition~\ref{P:Cassaigne first difference}.
Otherwise, we proceed by induction to show that we can decrease $k$. Let
\[
    W' = (W \cap \cL_{< k}(x)) \cup \{w \mid \exists a \in E^-_x(w) \text{ st. } aw \in W \cap \cL_k(x)\}.
\]
By definition, the set $W'$ is also an $x$-maximal suffix code included in $\cL_{\geq K}(x) \cap \cL_{\leq M}(x)$. Moreover, $\max\{|w| \mid w \in W'\} = k - 1$. It remains to prove that
\[
    \sum_{w \in W} (\Card E^+_x(w) - 1) \geq \sum_{w \in W'} (\Card E^+_x(w) - 1).
\]
Let $w \in W' \setminus W$. By $x$-maximality of $W$, we must have $aw \in W$ for all $a \in E^-_x(w)$. In addition, $w$ is of length at least $N$ thus it is strong or neutral. This implies that
\[
    \sum_{a \in E^-_x(w)} \left(\Card E^+_x(aw) - 1\right) = \Card E_x(w) - \Card E^-_x(w) \geq \Card E^+_x(w) - 1.
\]
As it is true for any $w \in W' \setminus W$, this ends the proof of the first inequality.

For the second inequality, we similarly prove that we can inductively increase $m = \min\{|w| \mid w \in W\}$ until $m = M$. It suffices to consider
\[
    W' = (W \cap \cL_{> m}(x)) \cup \{aw \mid w \in W \cap \cL_m(x), a \in E^-_x(w)\}
\]
and $w \in W \setminus W'$. Using the same inequalities as above, we then show that $\sum_{w \in W} (\Card E^+_x(w) - 1) \leq \sum_{w \in W'} (\Card E^+_x(w) - 1)$.
%
\end{proof}

Note that a similar proof shows that the previous result is also true when considering an $x$-maximal prefix code and replacing $E^+_x(w)$ by $E_x^-(w)$ in the inequalities.

\begin{prop}\label{P:number of coverings}
Let $x \in \cA^\Z$ and let $\sigma$ be a non-erasing morphism.
\begin{enumerate}
\item
    If $x$ is eventually neutral with threshold $N$ then there exists $C \in \Z$ such that, for any $n \geq \max\{1, N \norm{\sigma}\}$,
    \[
    	c_x(n) = C + p_x(n).
    \]
    In particular, if $x$ is a neutral bi-infinite word over $\cA$, we have
    \[
    	c_x(n) = \sum_{a \in \cA} |\sigma(a)| \ + (\Card \cA - 1) (n - 1)
    \]
    for all $n \geq 1$.
\item
    If $x$ is eventually strong or neutral with threshold $N$ then there exists $C \in \Z$ such that, for any $n \geq \max\{1, N \norm{\sigma}\}$,
    \[
    	c_x(n) \leq C + p_x(n).
    \]
\end{enumerate}
\end{prop}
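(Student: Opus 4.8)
The plan is to count coverings $(w,k)$ of length-$n$ words by grouping them according to the underlying factor $w \in \cL(x)$, and relating this count to the first difference of complexity $s_x$ via the maximal-suffix-code machinery of Lemma~\ref{L:reduction of maximal suffix code}. First I would fix $n \geq \max\{1, N\norm{\sigma}\}$ and observe that if $(w,k)$ is a covering of a word of length $n$, then $|\sigma(w)| \geq k + n > |\sigma(w_{[1,|w|-1]})| \geq |w|-1$, while $k + n \leq |\sigma(w)| + (n-1)$ forces $|\sigma(w)| \geq n - (n-1) = 1$; more usefully, the minimality conditions pin down, for each fixed $w$, exactly which shifts $k$ occur: $k$ ranges over those integers with $|\sigma(w_{[1,|w|-1]})| - n < k \le |\sigma(w_1)| - 1$ that additionally satisfy $0 \le k$ and $k + n \le |\sigma(w)|$. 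Counting these valid $k$ for a fixed $w$ gives a number that depends only on the word-lengths $|\sigma(w_i)|$, not on $x$; summing the ``base'' contribution over $w$ and correcting leads to a telescoping identity. The cleanest route, and the one I would actually pursue, is to set up a bijection-with-correction between $C_x(n)$ and a maximal suffix (or prefix) code of $x$.

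Concretely: given a covering $(w,k)$ of a length-$n$ word $u = \sigma(w)_{[k+1,k+n]}$, the shift $k$ together with $w_1$ determines a suffix of $\sigma(w_1)$; as $w$ and $k$ vary over $C_x(n)$ one sees that $C_x(n)$ is in bijection with the set of pairs (a factor $v$ of $x$ of bounded length, a legal shift), and the legal shifts attached to a fixed $v$ number $\Card E^+_x(v') - 1$-ish for an appropriate related word, so that $c_x(n) = \sum_{v \in W_n}\bigl(\Card E^+_x(v) - 1\bigr) + (\text{const})$ for a suitable $x$-maximal suffix code $W_n \subseteq \cL_{\ge ?}(x)$. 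The key point is that since $n \geq N\norm{\sigma}$, every word $w$ carrying a covering has length $|w| \geq n/\norm{\sigma} \geq N$ — hmm, more carefully, the relevant words appearing are the factors $w_{[2,|w|-1]}$ or similar, which still have length $\geq N$ because stripping two letters from a word of length $\geq n/\norm{\sigma}$ leaves length $\geq N-2$; one should instead index the code by words of length exactly controlled so that Lemma~\ref{L:reduction of maximal suffix code} applies with both $K,M \geq N$. Then by that lemma, $\sum_{v \in W_n}(\Card E^+_x(v)-1)$ is independent of the particular code and equals $s_x(n')$ up to a shift, whence $c_x(n) - p_x(n)$ is eventually constant (case 1, using $m_x \equiv 0$ on $\cL_{\geq N}$ so that all the intermediate inequalities are equalities), or eventually bounded above by $p_x(n) + C$ (case 2, where the strong-or-neutral hypothesis gives the inequality $\sum_{a}(\Card E^+_x(aw)-1) \geq \Card E^+_x(w)-1$ in only one direction).

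For the explicit formula in the neutral-word case: there $s_x(n) = \Card\cA - 1$ for all $n \geq 0$ (by the remark following Proposition~\ref{P:equivalence of eventual properties}), so $c_x(n) - c_x(1) = \sum_{j=2}^{n}\bigl(c_x(j) - c_x(j-1)\bigr)$ telescopes, each increment equalling $\Card\cA - 1$ by the argument above applied at threshold $N = 0$; combined with $c_x(1) = \sum_{a \in \cA}|\sigma(a)|$ from Remark~\ref{R:coverings of letters} this yields $c_x(n) = \sum_{a\in\cA}|\sigma(a)| + (\Card\cA-1)(n-1)$.

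The main obstacle I anticipate is bookkeeping the boundary shifts correctly — precisely identifying which $x$-maximal suffix code $W_n$ the coverings of length-$n$ words correspond to, and checking that its elements all have length between $N$ and some finite bound (so Lemma~\ref{L:reduction of maximal suffix code} genuinely applies), rather than anything conceptually deep. A secondary subtlety is that a covering $(w,k)$ records strictly more than the pair (first letter of $w$, shift $k$) — it also remembers enough of $w$ to fill out $n$ symbols of the image — so the bijection is really with coverings, and one must argue that, once $w_1$ and $k$ are fixed and $n \geq N\norm{\sigma}$, the number of admissible completions $w_{[2,|w|]}$ is governed by right-extensions in $\cL(x)$ of a definite factor; this is where the length lower bound $n \geq N\norm{\sigma}$ is used to guarantee the relevant factor lies above the threshold.
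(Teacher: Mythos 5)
Your high-level strategy is the paper's: produce an $x$-maximal suffix code out of the coverings, feed it to Lemma~\ref{L:reduction of maximal suffix code} (with $K=\lceil n/\norm{\sigma}\rceil\geq N$ and $M=n$, which is exactly what the hypothesis $n\geq N\norm{\sigma}$ buys), and telescope from Remark~\ref{R:coverings of letters}. But the one identity that makes this work is missing, and the substitute you write down is false. You claim $c_x(n)=\sum_{v\in W_n}(\Card E^+_x(v)-1)+\mathrm{const}$ for a suitable code $W_n$; in the neutral case the left-hand side is $\sum_{a}|\sigma(a)|+(\Card\cA-1)(n-1)$, which grows linearly in $n$, while the right-hand side is eventually constant (any $x$-maximal suffix code above the threshold contributes exactly $\Card\cA-1$), so no constant correction can make this an identity. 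The underlying reason your ``count legal shifts for each fixed $w$'' route stalls is that, for a fixed $w$, the number of admissible $k$ is determined purely by the lengths $|\sigma(w_1)|$, $|\sigma(w_{[1,|w|-1]})|$, $|\sigma(w)|$ and has nothing to do with $\Card E^+_x$ of anything — the extension counts cannot appear in a static count of $C_x(n)$.

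The correct statement is about the \emph{increment}: compare $C_x(n)$ with $C_x(n+1)$. Each $(w,k)\in C_x(n)$ either still lies in $C_x(n+1)$ (when $|\sigma(w)|>k+n$) or branches into the $\Card E^+_x(w)$ coverings $(wa,k)$, $a\in E^+_x(w)$ (when $|\sigma(w)|=k+n$), and every element of $C_x(n+1)$ arises exactly once this way; moreover for a given $w$ there is at most one $k$ with $|\sigma(w)|=k+n$ and $k<|\sigma(w_1)|$, and such a $k$ exists iff $|\sigma(w_{[2,|w|]})|<n\leq|\sigma(w)|$. This yields $c_x(n+1)-c_x(n)=\sum_{w\in W_n}(\Card E^+_x(w)-1)$ with $W_n=\{w\in\cL(x)\mid |\sigma(w_{[2,|w|]})|<n\leq|\sigma(w)|\}$, an $x$-maximal suffix code of words of length between $\lceil n/\norm{\sigma}\rceil$ and $n$. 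With this in hand your remaining steps (equality in the neutral case since both directions of Lemma~\ref{L:reduction of maximal suffix code} apply and $s_x$ is constant above $N$; the one-sided bound $\leq s_x(n)$ in the strong-or-neutral case; the telescoped explicit formula) go through as you describe. As written, though, the proof has a genuine gap at its central step, not just a bookkeeping issue.
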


\begin{proof}
We prove the two cases simultaneously by studying the growth $c_x(n+1) - c_x(n)$.

Remark that the elements of $C_x(n)$ and of $C_x(n+1)$ are linked. Indeed, each $(w, k) \in C_x(n)$ is related to one or several elements of $C_x(n+1)$ in one of the following ways.
\begin{itemize}
\item
	If $|\sigma(w)| = k + n$, then for all $a \in E^+_x(w)$, $(wa, k)$ is an element of $C_x(n+1)$.
\item
	Otherwise, we have $|\sigma(w)| \geq k + n + 1$ thus $(w, k)$ itself is in $C_x(n+1)$.
\end{itemize}
Moreover, with this technique, we obtain every element of $C_x(n+1)$ exactly once. When looking at $c_x(n+1) - c_x(n)$, we are then only interested in the pairs $(w, k) \in C_x(n)$ such that $|\sigma(w)| = k + n$. However, for a given word $w$, there exists $k < |\sigma(w_1)|$ such that $|\sigma(w)| = k + n$ if and only if
\[
	|\sigma(w_{[2,|w|]})| < n \leq |\sigma(w)|,
\]
and this $k$ is then unique.
Thus, if $W_n$ is the set
\[
	W_n = \{w \in \cL(x) \mid \, |\sigma(w_{[2,|w|]})| < n \leq |\sigma(w)|\},
\]
we have
\[
	c_x(n+1) - c_x(n) = \sum_{w \in W_n} (\Card E^+_x(w) - 1)
\]
for all $n \geq 1$.

By definition and since $\sigma$ is non-erasing, $W_n$ is an $x$-maximal suffix code and it contains words of length at most $n$ and at least $\left\lceil\frac{n}{\norm{\sigma}}\right\rceil$.

Thus by Lemma~\ref{L:reduction of maximal suffix code}, we obtain:
\begin{enumerate}
\item
    if $x$ is eventually neutral with threshold $N$, then for any $n \geq \max\{1, N \norm{\sigma}\}$,
    \[
        c_x(n+1) - c_x(n) = \sum_{w \in W_n} (\Card E^+_x(w) - 1) = s_x(n);
    \]
\item
    if $x$ is eventually strong or neutral with threshold $N$, then for any $n \geq \max\{1, N \norm{\sigma}\}$,
    \[
        c_x(n+1) - c_x(n) = \sum_{w \in W_n} (\Card E^+_x(w) - 1) \leq s_x(n).
    \]
\end{enumerate}
The conclusion follows, using Remark~\ref{R:coverings of letters} and the value of the factor complexity in the neutral case.
\end{proof}

\begin{thm}\label{T:growth of complexity}
If $x$ is an eventually weak (resp., strong) or neutral bi-infinite word and $\sigma$ is a non-erasing morphism, then there exists $C \in \Z$ such that
\[
	p_{\sigma(x)}(n) \leq C + p_x(n)
\]
for all $n \geq 0$.
\end{thm}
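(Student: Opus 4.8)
The plan is to glue together the two preparatory results already established: Lemma~\ref{L:complexity bounded by coverings}, which gives $p_{\sigma(x)}(n) \leq c_x(n)$, and Proposition~\ref{P:number of coverings}, which controls $c_x(n)$ in terms of $p_x(n)$ for all sufficiently large $n$. The only point requiring a small argument is that Proposition~\ref{P:number of coverings} is phrased for words that are eventually neutral, respectively eventually strong or neutral, whereas the theorem speaks of words that are eventually weak or neutral, respectively eventually strong or neutral. The strong case matches part~2 of Proposition~\ref{P:number of coverings} verbatim. For the weak case, I would first invoke Proposition~\ref{P:equivalence of eventual properties}: a word that is eventually weak or neutral is in fact eventually neutral (with some finite, possibly larger, threshold $N$), which is exactly the hypothesis needed to apply part~1 of Proposition~\ref{P:number of coverings}.

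Concretely, let $N$ be the relevant threshold and set $n_0 = \max\{1, N\norm{\sigma}\}$. For every $n \geq n_0$, Proposition~\ref{P:number of coverings} supplies a constant $C_0 \in \Z$ with $c_x(n) \leq C_0 + p_x(n)$ (with equality in the weak/neutral case, which is still enough), so that
\[
    p_{\sigma(x)}(n) \leq c_x(n) \leq C_0 + p_x(n)
\]
by Lemma~\ref{L:complexity bounded by coverings}. There remain only the finitely many indices $0 \leq n < n_0$, and for each of these $p_{\sigma(x)}(n) - p_x(n)$ is a well-defined integer; hence it suffices to enlarge the constant, setting
\[
    C = \max\Bigl(\{C_0\} \cup \{\, p_{\sigma(x)}(n) - p_x(n) \mid 0 \leq n < n_0 \,\}\Bigr),
\]
which yields $p_{\sigma(x)}(n) \leq C + p_x(n)$ for every $n \geq 0$, as desired.

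Once the machinery of coverings and Proposition~\ref{P:number of coverings} is in place, this last step is pure bookkeeping and presents no genuine obstacle; the substantive work lies entirely in the earlier analysis of $c_x(n+1) - c_x(n)$ via the $x$-maximal suffix codes $W_n$. The only subtlety worth flagging explicitly is the reduction through Proposition~\ref{P:equivalence of eventual properties}, since ``eventually weak or neutral'' is a priori a weaker hypothesis than ``eventually neutral''; the change of threshold it entails is harmless because the argument above only needs $N$ to be finite.
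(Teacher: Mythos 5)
Your proposal is correct and follows exactly the paper's own route: reduce the weak case to the neutral one via Proposition~\ref{P:equivalence of eventual properties}, apply Proposition~\ref{P:number of coverings}, and conclude with Lemma~\ref{L:complexity bounded by coverings}. The only difference is that you spell out the enlargement of the constant to cover the finitely many indices below the threshold, which the paper leaves implicit.
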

\begin{proof}
Recall that, by Proposition~\ref{P:equivalence of eventual properties}, an eventually weak or neutral word is eventually neutral.
Using Proposition~\ref{P:number of coverings}, we can then find $C$ such that
\[
    c_x(n) \leq C + p_x(n)
\]
for all $n \geq 0$. We then conclude using Lemma~\ref{L:complexity bounded by coverings}.
\end{proof}

As a direct consequence of Theorem~\ref{T:growth of complexity}, we obtain the following result.

\begin{cor}\label{C:alphabet cannot grow}
Let $\sigma : \cA^* \to \cB^*$ be a non-erasing morphism. If there exists a neutral bi-infinite word $x$ over $\cA$ such that $\sigma(x)$ is neutral, then $\Card \cB \leq \Card \cA$.
\end{cor}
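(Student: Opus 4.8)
The plan is to combine the explicit complexity formula for neutral bi-infinite words (recalled right after Proposition~\ref{P:equivalence of eventual properties}) with the additive bound of Theorem~\ref{T:growth of complexity}, and then let the length $n$ tend to infinity.

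The one point that requires a small argument is that $\sigma(x)$ is a bi-infinite word \emph{over} $\cB$, so that the complexity formula applies to it. First I would note that since $\sigma$ is non-erasing, $\sigma(x) \in \cB^\Z$. Since $x$ is over $\cA$, every letter $a \in \cA$ occurs in $x$, hence every $\sigma(a)$ is a factor of $\sigma(x)$. By minimality of $\cB$, every letter of $\cB$ appears in $\sigma(a)$ for some $a \in \cA$, and therefore every letter of $\cB$ appears in $\sigma(x)$. Thus $\sigma(x)$ is over $\cB$.

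Now I would simply plug in the numbers. The neutral bi-infinite word $x$ over $\cA$ satisfies $p_x(n) = (\Card \cA - 1)n + 1$, and the (assumed) neutral bi-infinite word $\sigma(x)$ over $\cB$ satisfies $p_{\sigma(x)}(n) = (\Card \cB - 1)n + 1$. On the other hand, a neutral word is in particular eventually weak or neutral, so Theorem~\ref{T:growth of complexity} yields a constant $C \in \Z$ with $p_{\sigma(x)}(n) \leq C + p_x(n)$ for all $n \geq 0$. Substituting,
\[
    (\Card \cB - 1)n + 1 \leq C + (\Card \cA - 1)n + 1,
\]
that is, $(\Card \cB - \Card \cA)\,n \leq C$ for all $n \geq 0$. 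If $\Card \cB > \Card \cA$, the left-hand side is unbounded as $n \to \infty$, a contradiction; hence $\Card \cB \leq \Card \cA$.

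I do not expect any real obstacle: the only substantive ingredient is Theorem~\ref{T:growth of complexity}, and the rest is the bookkeeping check that $\sigma(x)$ is truly a word over $\cB$ (which is where the minimality assumption on $\cB$ enters) together with an elementary asymptotic comparison of two affine functions of $n$.
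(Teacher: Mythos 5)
Your proof is correct and follows exactly the paper's argument: apply the complexity formula $p(n) = (\Card\cA - 1)n + 1$ for neutral words over their alphabet to both $x$ and $\sigma(x)$, invoke Theorem~\ref{T:growth of complexity}, and compare growth rates. The extra check that $\sigma(x)$ is a word over $\cB$ is a harmless (and reasonable) addition that the paper leaves implicit.
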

\begin{proof}
Indeed, $p_x(n) = (\Card \cA - 1) n + 1$ and $p_{\sigma(x)}(n) = (\Card \cB - 1) n + 1$ for any $n$ thus the conclusion follows from Theorem~\ref{T:growth of complexity}.
\end{proof}

\begin{rem}
Without additional hypothesis on $\sigma$, the inequality of Corollary~\ref{C:alphabet cannot grow} is the only restriction we can obtain on the sizes of the alphabets of a morphism preserving the neutrality of a bi-infinite word, or even preserving dendricity of a bi-infinite word. Indeed, using codings of RIET, we can build examples for any values of $\Card \cA$, $\Card \cB$ such that $\Card \cB \leq \Card \cA$ with the following technique.

Let $y$ be the coding of a point in an RIET $T$ with $\Card \cB$ intervals and $x$ the coding of the same point in the interval exchange transformation obtained by cutting one of the intervals of $T$ into $\Card \cA - \Card \cB + 1$ sub-intervals. The bi-infinite word $x$ is then on an alphabet of size $\Card \cA$ and, if $\sigma$ maps all the letters coding these sub-intervals to a new letter, then $y = \sigma(x)$ (up to a bijective coding). This morphism $\sigma$ then preserves dendricity for an infinite number of dendric bi-infinite words.
\end{rem}

\section{Dendric preserving morphisms}\label{S:dendric preserving}

As seen in the previous section, the fact that a morphism preserves dendricity for an infinite number of dendric bi-infinite words does not imply that the starting alphabet and the image alphabet have the same size. However, we will now prove that it is true when the morphism preserves dendricity for all dendric bi-infinite words.

\begin{defi}
A morphism $\sigma : \cA^* \to \cB^*$ is \emph{dendric preserving} if, for any dendric bi-infinite word $x$ over $\cA$, the bi-infinite word $\sigma(x)$ is dendric.
\end{defi}

\begin{rem}
Although we only consider alphabets of size at least 2 in this paper, the dendric preserving morphisms when one of the alphabets is unary can easily be described. Indeed, if $\Card \cB = 1$, then any morphism $\sigma : \cA^* \to \cB^*$ is dendric preserving.
And if $\Card \cA = 1$, then, for any morphism $\sigma : \cA^* \to \cB^*$ and any bi-infinite word $x \in \cA^\Z$, the image $\sigma(x)$ is periodic thus $\sigma$ is dendric preserving if and only if $\Card \cB = 1$.
\end{rem}

We now define two words associated with a given morphism. These words will play an important role when looking at factors in the image.

\begin{defi}
Let $\sigma : \cA^* \to \cB^*$ be a non-erasing morphism. If it is finite, we denote by $p_\sigma$ (resp., $s_\sigma$) the longest common prefix (resp., suffix) to all the $\sigma(a)^\omega$ (resp., $^\omega\sigma(a)$), $a \in \cA$.
\end{defi}

Observe that, $p_\sigma$ and $s_\sigma$ can be empty. In the case of a dendric preserving morphism, $p_\sigma$ and $s_\sigma$ are well defined, as stated by the following lemma.

\begin{lem}\label{L:existence of s_sigma}
Let $\sigma : \cA^* \to \cB^*$ be a non-erasing morphism. If $p_\sigma$ (resp., $s_\sigma$) is not defined, then there exists a word $v \in \cB^+$ such that $\sigma(x) = \, ^\omega v . v^\omega$ for any bi-infinite word $x \in \cA^\Z$.
\end{lem}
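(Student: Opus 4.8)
The first step is to understand the hypothesis. The longest common prefix of a family of right-infinite words always exists as a (possibly infinite) object, so the only way $p_\sigma$ fails to be a finite word is that this longest common prefix is infinite; that is, all the right-infinite words $\sigma(a)^\omega$, $a \in \cA$, coincide. So ``$p_\sigma$ is not defined'' is equivalent to $\sigma(a)^\omega = \sigma(b)^\omega$ for all $a, b \in \cA$, and the whole proof reduces to analysing this situation. (The case of $s_\sigma$ is entirely symmetric, reading words from right to left, so I would only treat $p_\sigma$.)

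The second step is to extract a single word $v$ of which every $\sigma(a)$ is a power. Here I would invoke the classical fact that, for finite non-empty words $u, u'$, one has $u^\omega = (u')^\omega$ if and only if $u$ and $u'$ are powers of a common finite non-empty word (equivalently $uu' = u'u$), together with the uniqueness of the primitive root of a non-empty word. Concretely: fix a letter $a_0 \in \cA$ and let $v \in \cB^+$ be the primitive root of $\sigma(a_0)$, so $\sigma(a_0) = v^{n_0}$ with $n_0 \geq 1$. For every $a \in \cA$ we have $v^\omega = \sigma(a_0)^\omega = \sigma(a)^\omega$, hence $\sigma(a)$ and $v$ are powers of a common word; since $v$ is primitive, that common word is $v$ itself, so $\sigma(a) = v^{k_a}$ for some integer $k_a \geq 1$ (the inequality $k_a \geq 1$ because $\sigma$ is non-erasing). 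The one point deserving a little care is precisely this passage from ``each pair $\sigma(a_0), \sigma(a)$ has a common root'' to ``all the $\sigma(a)$ are powers of one and the same $v$'', which is why I single out the primitive root of a fixed $\sigma(a_0)$ rather than arguing pairwise.

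The third step is immediate. Given an arbitrary bi-infinite word $x = \cdots x_{-1} x_0 x_1 \cdots \in \cA^\Z$, one computes $\sigma(x) = \cdots \sigma(x_{-1}) \sigma(x_0) \sigma(x_1) \cdots = \cdots v^{k_{x_{-1}}} v^{k_{x_0}} v^{k_{x_1}} \cdots$, which is the bi-infinite word of period $v$; moreover the origin of $\sigma(x)$, being the first letter of the block $\sigma(x_0) = v^{k_{x_0}}$, sits at a multiple of $|v|$, so this concatenation is exactly $^\omega v . v^\omega$. This gives the word $v \in \cB^+$ claimed by the statement, independently of $x$, and finishes the proof. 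I do not expect any genuine obstacle: apart from the normalization to the primitive root mentioned above, everything is a direct application of the definitions and of the standard $u^\omega = (u')^\omega$ characterization.
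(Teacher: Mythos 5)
Your proof is correct and follows essentially the same route as the paper: observe that all the $\sigma(a)^\omega$ coincide, deduce that every $\sigma(a)$ is a power of a single word $v$, and conclude that $\sigma(x)$ is the $v$-periodic bi-infinite word. The only cosmetic difference is that the paper obtains $v$ as the prefix of length $\gcd\{|\sigma(a)| \mid a\in\cA\}$ via Fine--Wilf, whereas you normalize to the primitive root of one $\sigma(a_0)$ and use the commutation characterization of $u^\omega=(u')^\omega$; these are equivalent instances of the same classical periodicity fact.
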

\begin{proof}
Assume that $p_\sigma$ is not defined, the proof for $s_\sigma$ is similar.
Thus there exists an infinite word $y \in \cB^\N$ such that $\sigma(a)^\omega = y$ for all $a \in \cA$. This word $y$ is then periodic of period $p := \gcd(\{|\sigma(a)| \mid a \in \cA\})$ by Fine and Wilf's theorem. Let $v$ be its prefix of length $p$. By construction, for each letter $a$, $\sigma(a)$ is a power of $v$. This proves that $\sigma(x) = \, ^\omega v . v^\omega$.
\end{proof}

The following result provides different equivalent ways that we could have defined $p_\sigma$. We also have a similar result for $s_\sigma$ using suffixes.

\begin{lem}\label{L:equivalent definitions}
Let $\sigma : \cA^* \to \cB^*$ be a non-erasing morphism. For any word $p \in \cB^*$ and any letter $a \in \cA$, the following are equivalent:
\begin{enumerate}
\item
	$p$ is a prefix of $\sigma(a)^\omega$;
\item
	$p$ is a proper prefix of $\sigma(a) p$.
\end{enumerate}
Moreover, the following are also equivalent:
\begin{enumerate}
\item
	$p$ satisfies one of the (equivalent) properties above for every letter $a \in \cA$;
\item
	$p$ is a prefix of $\sigma(w) p$ for any $w \in \cA^*$;
\item
	there exists $N \geq 0$ such that $p$ is a prefix of $\sigma(w)$ for any $w \in \cA^{\geq N}$.
\end{enumerate}
\end{lem}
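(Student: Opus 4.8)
The plan is to dispatch the two equivalences separately, using only elementary prefix manipulations; the hypothesis that $\sigma$ is non-erasing is what makes everything work and will be invoked several times.

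For the first pair of conditions, I would argue as follows. If $p$ is a proper prefix of $\sigma(a)p$, then $\sigma(a)p = pu$ for some non-empty $u \in \cB^+$, necessarily with $|u| = |\sigma(a)|$. An immediate induction gives $\sigma(a)^n p = p u^n$ for all $n \geq 0$, so as soon as $n$ is large enough that $n|\sigma(a)| \geq |p|$, the word $p$ is a prefix of $\sigma(a)^n p = p u^n$, hence a prefix of its length-$n|\sigma(a)|$ prefix $\sigma(a)^n$, hence a prefix of $\sigma(a)^\omega$. Conversely, if $p$ is a prefix of $\sigma(a)^\omega$, then $\sigma(a)p$ is also a prefix of $\sigma(a)\sigma(a)^\omega = \sigma(a)^\omega$; two prefixes of the same infinite word are prefix comparable, and since $|p| < |\sigma(a)p|$ we conclude that $p$ is a proper prefix of $\sigma(a)p$. (Here $\sigma(a) \neq \eps$ is used to guarantee that $\sigma(a)^\omega$ is a genuine infinite word and that the lengths are strictly ordered.)

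For the second list, I would prove the cycle (1)$\Rightarrow$(2)$\Rightarrow$(3)$\Rightarrow$(1). The implication (1)$\Rightarrow$(2) goes by induction on $|w|$: it is trivial for $w = \eps$, and if $w = a w'$ then by the induction hypothesis $\sigma(w')p = pt$ for some $t \in \cB^*$, so $\sigma(w)p = \sigma(a) p t$, and since $p$ is a prefix of $\sigma(a)p$ (the first-pair property at the letter $a$) it is a prefix of $\sigma(a)pt = \sigma(w)p$. For (2)$\Rightarrow$(3), take $N = |p|$: for $w \in \cA^{\geq N}$ non-erasingness gives $|\sigma(w)| \geq |w| \geq |p|$, and $p$ is a prefix of $\sigma(w)p$ by (2), so comparing lengths $p$ is already a prefix of $\sigma(w)$. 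Finally, for (3)$\Rightarrow$(1), fix a letter $a$ and apply (3) to $w = a^n$ with $n \geq \max\{N, 1\}$: then $p$ is a prefix of $\sigma(a^n) = \sigma(a)^n$, which is itself a prefix of $\sigma(a)^\omega$, so $p$ is a prefix of $\sigma(a)^\omega$; as this holds for every $a$ we recover condition (1) of the second list via the first equivalence.

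The analogous statement for $s_\sigma$ with suffixes is obtained by the usual mirror symmetry, i.e. by reversing all words and applying the prefix version to the morphism sending each letter to the reversal of its image. I do not expect a genuine obstacle in this lemma; the only points that need a little care are the repeated use of non-erasingness and the observation — harmless but worth isolating — that a finite word which is a prefix of $\sigma(a)^n$ for some (equivalently, all large) $n$ is automatically a prefix of the infinite word $\sigma(a)^\omega$.
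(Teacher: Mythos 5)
Your proof is correct and follows essentially the same route as the paper's: elementary prefix-comparability arguments for the first equivalence (your explicit identity $\sigma(a)^n p = p u^n$ is just a more concrete phrasing of the paper's iterated prefix-comparability with $\sigma(a)^k$), and the same cycle (1)$\Rightarrow$(2)$\Rightarrow$(3)$\Rightarrow$(1) with the same induction and the same use of non-erasingness for the second list. No issues.
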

\begin{proof}
If $p$ is a prefix of $\sigma(a)^\omega$, then it directly follows that it is a prefix of $\sigma(a) p$.
For the converse, $p$ is prefix comparable with $\sigma(a)$ thus $\sigma(a) p$ is prefix comparable with $\sigma(a)^2$. This implies that $p$ is prefix comparable with $\sigma(a)^2$. We iterate to show that $p$ is prefix comparable with $\sigma(a)^k$ for any $k \in \N$. The morphism $\sigma$ is non-erasing thus this implies that $p$ is a prefix of $\sigma(a)^\omega$.

Let us show the second set of equivalences.

Assume that $p$ satisfies the previous properties for all the letters. We proceed by induction on the length of $w$ to show that $p$ is a prefix of $\sigma(w) p$. If $w = \eps$, it is trivial.
Assume that it is satisfied for $w'$ and that $w = w'a$, $a \in \cA$. By hypothesis, $p$ is a prefix of $\sigma(a) p$ thus $\sigma(w') p$ is a prefix of $\sigma(w) p$. The conclusion follows.

Since the morphism $\sigma$ is non-erasing, for any word $w \in \cA^{\geq |p|}$, the length of $\sigma(w)$ is at least $|p|$ thus $p$ being a prefix of $\sigma(w) p$ implies that $p$ is a prefix of $\sigma(w)$.

Finally, if $p$ is a prefix of $\sigma(w)$ for any long enough $w$, then $p$ is a prefix of $\sigma(a^k)$ for any large enough $k$ thus $\sigma$ is a prefix of $\sigma(a)^\omega$ for any letter $a \in \cA$.
\end{proof}

\begin{prop}
\label{P:at most one antecedent for each letter}
Let $\sigma : \cA^* \to \cB^*$ be a dendric preserving morphism.
For each letter $b \in \cB$, there exists at most one letter $a \in \cA$ such that $p_\sigma b$ is a prefix of $\sigma(a) p_\sigma$ and at most one letter $a' \in \cA$ such that $b s_\sigma$ is a suffix of $s_\sigma \sigma(a')$.
\end{prop}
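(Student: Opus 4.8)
The plan is to argue by contradiction: from a letter $b\in\cB$ having two distinct ``antecedents'' I will build a dendric word over $\cA$ whose image under $\sigma$ is not dendric. I only treat the claim about $p_\sigma$; the one about $s_\sigma$ follows by applying it to the morphism obtained by reversing every image $\sigma(a)$, since reversal sends $\sigma(x)$ to the reversal of $\sigma(x)$, preserves dendricity, and exchanges the roles of $p_\sigma$ and $s_\sigma$. First, since $\sigma$ is dendric preserving and $\Card\cB\ge 2$, the word $p_\sigma$ is defined: otherwise Lemma~\ref{L:existence of s_sigma} would make every $\sigma(x)$ periodic, while a dendric bi-infinite word over an alphabet of size at least $2$ has the unbounded factor complexity $(\Card\cB-1)n+1$ (here one uses that $\cB$ being minimal forces every letter of $\cB$ to occur in $\sigma(x)$, so $\sigma(x)$ is genuinely a word over $\cB$). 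I then reduce to the case $p_\sigma=\eps$: replacing $\sigma$ by the conjugate morphism $\tau$ determined by $\sigma(a)p_\sigma=p_\sigma\tau(a)$ for all $a\in\cA$, the bi-infinite words $\tau(x)$ and $\sigma(x)$ are shifts of one another, so $\tau$ is again dendric preserving, $p_\tau=\eps$, and the letter ``just after $p_\sigma$ in $\sigma(a)p_\sigma$'' is exactly $\tau(a)_1$. So it suffices to prove: if $\sigma$ is dendric preserving with $p_\sigma=\eps$, then $a\mapsto\sigma(a)_1$ is injective. (When $\Card\cA=2$ this is automatic, since $p_\sigma=\eps$ already means $\sigma(a_1)_1\neq\sigma(a_2)_1$.)

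So suppose $\sigma(a_1)_1=\sigma(a_2)_1=b$ with $a_1\neq a_2$, and let us produce a dendric word $x$ over $\cA$ with $\sigma(x)$ not dendric. The tool is Lemma~\ref{L:counter-examples with IET}: I choose a pair of total orders on $\cA$ with the ``distinct initial segments'' property and a connected tree $G$ on $\cA\sqcup\cA$, planar for these orders, whose edge set is tailored to the situation. Writing $x$ for the resulting RIET coding (recurrent, dendric, with $\cE_x(\eps)=G$ and $\cL_2(x)$ equal to the edge set of $G$), I have control over the sets $E^-_x(a)$ and $E^+_x(a)$ for every $a$ — these are just the neighbourhoods of the corresponding vertices of $G$ — so I can arrange, say, that $\{\sigma(c)_{\mathrm{last}}\mid c\in E^-_x(a_1)\}$ and $\{\sigma(c)_{\mathrm{last}}\mid c\in E^-_x(a_2)\}$ are disjoint from each other and from the fixed set $\{\sigma(a)_{i-1}\mid \sigma(a)_i=b,\ i\ge 2\}$, and similarly on the right.

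The contradiction comes from examining the extension graph $\cE_{\sigma(x)}(b)$, which should be a tree. Every occurrence of $b$ in $\sigma(x)$ is either at the start of a block $\sigma(a)$ with $\sigma(a)_1=b$ — in particular at the start of every $a_1$-block and every $a_2$-block — or occurs inside some block. A first-letter occurrence of $b$ inside $\sigma(a)$ is preceded by $\sigma(c)_{\mathrm{last}}$ for $c\in E^-_x(a)$ and is followed either by the fixed letter $\sigma(a)_2$ (if $\Card\sigma(a)\ge 2$) or by $\sigma(f)_1$ for $f\in E^+_x(a)$ (if $\sigma(a)=b$), in which case the edges it produces form a copy of the tree $\cE_x(a)$; the remaining occurrences of $b$ have one or both contexts determined by $\sigma$ alone. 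Feeding in the separation built into $G$, the edges of $\cE_{\sigma(x)}(b)$ coming from $a_1$-blocks and those coming from $a_2$-blocks live on essentially disjoint vertex sets, so an Euler-characteristic count on this bipartite graph (two trees' worth of edges, with too few shared vertices to glue them into a single tree) shows $\cE_{\sigma(x)}(b)$ is disconnected, or else contains a cycle — in either case not a tree. Hence $\sigma(x)$ is not dendric, contradicting the hypothesis on $\sigma$.

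The main obstacle is precisely the clause ``feeding in the separation'' and the count behind it: one has to keep track of \emph{all} occurrences of $b$ in $\sigma(x)$, not just those at block boundaries, and whether a boundary occurrence is followed by a fixed letter or by an $x$-dependent one depends on whether $\sigma(a_i)$ has length $1$. This forces a case analysis on the shapes of $\sigma(a_1)$ and $\sigma(a_2)$ (their lengths; whether their second letters agree; whether letters other than $a_1,a_2$ also have an image starting with $b$), and in each case one must check both that the required configuration can be placed inside a connected planar tree $G$ for some pair of staircase orders — which is exactly the hypothesis of Lemma~\ref{L:counter-examples with IET} — and that the vertex/edge count for $\cE_{\sigma(x)}(b)$ (or, in the degenerate cases, for $\cE_{\sigma(x)}(\eps)$) genuinely fails the tree identity. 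Establishing the existence of such a $G$ uniformly in $\sigma,a_1,a_2$ — handling very small alphabets by hand if needed — is where the real work of the proof is concentrated.
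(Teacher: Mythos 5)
Your overall strategy---derive a contradiction by feeding a suitably chosen RIET coding from Lemma~\ref{L:counter-examples with IET} into $\sigma$---is the same as the paper's, but you apply it to the wrong extension graph, and this is a genuine gap rather than a presentational one. After conjugating to $p_\sigma=\eps$ you examine $\cE_{\sigma(x)}(b)$ for the single letter $b=\sigma(a_1)_1=\sigma(a_2)_1$. The left extension of a boundary occurrence of $b$ (one sitting at the start of a block $\sigma(a_i)$) is the \emph{last} letter of the preceding block $\sigma(c)$. If $s_\sigma\ne\eps$ --- a perfectly common situation --- then all images $\sigma(c)$ end with the same letter $d$, so every boundary occurrence of $b$ has the same left extension $d$, the ``separation'' of $\{\sigma(c)_{\mathrm{last}}\mid c\in E^-_x(a_1)\}$ from $\{\sigma(c)_{\mathrm{last}}\mid c\in E^-_x(a_2)\}$ that your Euler-characteristic count relies on is unachievable for \emph{any} choice of $G$, and the boundary edges form a star at $d$ rather than two trees that fail to glue. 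Nothing then forces $\cE_{\sigma(x)}(b)$ to violate the tree identity: the failure of dendricity may only be visible on a longer factor. The missing ingredient is the maximality of $s_\sigma$, which you never invoke for the $p_\sigma$-half of the statement: it supplies two \emph{distinct} left vertices $d\ne d'$ (via letters $c,c'$ with $ds_\sigma$ a suffix of $s_\sigma\sigma(c)$ and $d's_\sigma$ a suffix of $s_\sigma\sigma(c')$), but only in the extension graph of the word $s_\sigma p_\sigma$, not of $b$ alone.

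The paper's proof works directly with $\cE_{\sigma(x)}(s_\sigma p_\sigma)$ and needs no case analysis: given two antecedents $a,a'$ of $b$, maximality of $p_\sigma$ yields $a''$ whose ``next letter'' is some $b'\ne b$, maximality of $s_\sigma$ yields $c,c'$ with distinct ``previous letters'' $d\ne d'$, and an RIET coding containing the factors $ca,\,ca'',\,c'a'',\,c'a'$ (a path, hence realizable by Lemma~\ref{L:counter-examples with IET}) produces the $4$-cycle $d,b,d',b'$ in $\cE_{\sigma(x)}(s_\sigma p_\sigma)$. Your preliminary steps (existence of $p_\sigma$, the conjugation reducing to $p_\sigma=\eps$, the reversal argument for the $s_\sigma$-claim) are fine, but the core of your argument is both misdirected as above and, by your own admission, not carried out: the case analysis on the shapes of $\sigma(a_1),\sigma(a_2)$, the tracking of the internal occurrences of $b$, and the vertex/edge count are exactly the content of the proof, and they are deferred rather than done. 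I recommend replacing the single-letter analysis by the extension graph of $s_\sigma p_\sigma$ and exploiting the maximality of both $p_\sigma$ and $s_\sigma$.
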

\begin{proof}
Assume by contrary that there exist two letters $a, a' \in \cA$ such that $p_\sigma b$ is a prefix of both $\sigma(a) p_\sigma$ and $\sigma(a') p_\sigma$. By maximality of $p_\sigma$, there also exists a letter $a'' \in \cA$ and a letter $b' \ne b$ such that $p_\sigma b'$ is a prefix of $\sigma(a'') p_\sigma$.

Similarly, by maximality of $s_\sigma$, there exist two distinct letters $d, d' \in \cB$ and two letters $c, c' \in \cA$ such that $d s_\sigma$ is a suffix of $s_\sigma \sigma(c)$ and $d' s_\sigma$ is a suffix of $s_\sigma \sigma(c')$.

Using Lemma~\ref{L:counter-examples with IET} for example, we can find a coding $x$ of an RIET over $\cA$ (so, in particular a dendric bi-infinite word) such that $ca, ca'', c'a'', c'a'$ are factors of length $2$. In $\sigma(x)$, the extension graph of $s_\sigma p_\sigma$ then contains a cycle passing through $b, b'$ on the right and $d, d'$ on the left, a contradiction.
\end{proof}

\begin{rem}\label{R:OK for interval exchanges 1}
More specifically, the previous result is true as soon as $s_\sigma$ and $p_\sigma$ exist and, for any coding $x$ of an RIET, the extension graph of $s_\sigma p_\sigma$ in $\sigma(x)$ is acyclic.
\end{rem}

Combined with Corollary~\ref{C:alphabet cannot grow}, this result directly implies the equality of the sizes of the alphabets.

\begin{cor}\label{C:alphabet is constant}
If $\sigma : \cA^* \to \cB^*$ is dendric preserving, then $\Card \cA = \Card \cB$.
\end{cor}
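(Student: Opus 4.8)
The plan is to establish the two inequalities $\Card\cB \leq \Card\cA$ and $\Card\cA \leq \Card\cB$ separately, the first via Corollary~\ref{C:alphabet cannot grow} and the second via Proposition~\ref{P:at most one antecedent for each letter}, as suggested by the discussion preceding the statement.

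For $\Card\cB \leq \Card\cA$, I would start from a concrete dendric word over $\cA$: since $\Card\cA \geq 2$, there is a natural coding $x$ of a regular interval exchange transformation on $\cA$ (such codings exist and are dendric; alternatively one can invoke Lemma~\ref{L:counter-examples with IET} with a connected planar bipartite graph on $\cA$), and $x$ is recurrent. Being dendric, $x$ is neutral (each extension graph is a tree). As $\sigma$ is dendric preserving, $\sigma(x)$ is dendric, hence neutral; moreover, because $x$ is recurrent over $\cA$ and $\cB$ is minimal, every letter of $\cB$ occurs in $\sigma(x)$, so $\sigma(x)$ is a neutral bi-infinite word \emph{over $\cB$}. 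Corollary~\ref{C:alphabet cannot grow} then yields $\Card\cB \leq \Card\cA$.

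For $\Card\cA \leq \Card\cB$, I would build an injection $\phi : \cA \to \cB$. First, since $\sigma$ is dendric preserving, $p_\sigma$ is well defined (Lemma~\ref{L:existence of s_sigma}: the images of the letters cannot all be powers of a common word, as a periodic word over an alphabet of size at least $2$ has bounded factor complexity and hence is not dendric). For each $a \in \cA$, Lemma~\ref{L:equivalent definitions} shows $p_\sigma$ is a prefix of $\sigma(a)^\omega$, hence a \emph{proper} prefix of $\sigma(a)p_\sigma$ (properness because $\sigma$ is non-erasing); let $\phi(a)$ be the letter of $\cB$ with $p_\sigma\,\phi(a)$ a prefix of $\sigma(a)p_\sigma$. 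Proposition~\ref{P:at most one antecedent for each letter} says each $b \in \cB$ has at most one such antecedent, so $\phi$ is injective and $\Card\cA \leq \Card\cB$. Combining the two inequalities gives $\Card\cA = \Card\cB$. The argument is short given the earlier results; the only points needing care are verifying that Corollary~\ref{C:alphabet cannot grow} genuinely applies — i.e. that the starting dendric word exists over $\cA$ and that its image is neutral over the full alphabet $\cB$, so that the exact complexity formula is available on both sides — and that $\phi$ is well defined, which is precisely where dendric preservation (and not an arbitrary morphism) is used for the second inequality.
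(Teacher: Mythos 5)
Your proposal is correct and follows essentially the same route as the paper, which obtains the corollary by combining Proposition~\ref{P:at most one antecedent for each letter} (giving the injection $\cA \to \cB$ via $p_\sigma$, hence $\Card\cA \leq \Card\cB$) with Corollary~\ref{C:alphabet cannot grow} (giving $\Card\cB \leq \Card\cA$ after feeding in any dendric, hence neutral, word over $\cA$). You merely make explicit the details the paper leaves implicit — the existence of a dendric word over $\cA$, the fact that $\sigma(x)$ is over all of $\cB$, and the well-definedness of $p_\sigma$ — and these verifications are all correct.
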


On a two letters alphabet, the dendric words are exactly the Sturmian words. In this case, the dendric preserving morphisms are called \emph{Sturmian morphisms} and it is well known (see~\cite{Lothaire} for example) that they are exactly, up to a bijective coding, the morphisms generated by
\[
    L_0 :
    \begin{cases}
        0 \mapsto 0\\
        1 \mapsto 01
    \end{cases}
    \quad
    R_0 :
    \begin{cases}
        0 \mapsto 0\\
        1 \mapsto 10
    \end{cases}
    \quad
    L_1 :
    \begin{cases}
        0 \mapsto 10\\
        1 \mapsto 1
    \end{cases}
    \quad
    R_1 :
    \begin{cases}
        0 \mapsto 01\\
        1 \mapsto 1
    \end{cases}.
\]

These morphisms can be generalized to larger alphabets by the Arnoux-Rauzy morphisms.
\begin{defi}
The \emph{Arnoux-Rauzy morphisms} over $\cA$ are defined by
\[
	L_\ell :
	\begin{cases}
	 \ell \mapsto \ell \\
	 a \mapsto \ell a & \forall a \in \cA \setminus \{\ell\}
	\end{cases}
	\qquad
	R_\ell :
	\begin{cases}
	 \ell \mapsto \ell \\
	 a \mapsto a \ell & \forall a \in \cA \setminus \{\ell\}
	\end{cases}
\]
for any letter $\ell \in \cA$.
\end{defi}
Note that, given an alphabet $\cA$, we could restrict ourselves to Arnoux-Rauzy morphisms for a fixed letter $\ell \in \cA$ and compose with permutations of $\cA$ to obtain the other Arnoux-Rauzy morphisms since $\pi^{-1} S_\ell \pi = S_{\pi^{-1}(\ell)}$ for $S \in \{L, R\}$.

It is easy to see that these morphisms are dendric preserving. In fact, we have the following stronger result.

\begin{lem}\label{L:stability of dendric preserving}
For any bi-infinite word $x$ over $\cA$ and any letter $\ell \in \cA$, $x$ is dendric if and only if $y := L_\ell(x)$ (resp., $y := R_\ell(x)$) is.

In particular, a morphism $\tau$ is dendric preserving if and only if $L_\ell \circ \tau$ (resp., $R_\ell \circ \tau$) is.
\end{lem}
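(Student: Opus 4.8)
The plan is to show directly that the Arnoux-Rauzy morphisms set up a bijection between the bispecial factors of $x$ and those of $y = L_\ell(x)$ (the case of $R_\ell$ being symmetric, obtained by reversal), and that corresponding bispecial factors have isomorphic extension graphs. Since a word is dendric if and only if \emph{all} its bispecial factors have extension graphs that are trees, and non-bispecial factors are automatically dendric, this equivalence of extension graphs immediately gives the first claim. The second claim then follows formally: $\tau$ is dendric preserving iff $\tau(x)$ is dendric for every dendric $x$, iff $L_\ell(\tau(x))$ is dendric for every dendric $x$ (by the first part applied to $\tau(x)$), iff $(L_\ell\circ\tau)(x)$ is dendric for every dendric $x$, i.e. iff $L_\ell\circ\tau$ is dendric preserving.

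The key structural observation is that $\sigma := L_\ell$ is injective on $\cA^*$ and that $\sigma(\cA^*)$ is exactly the set of words over $\{a\ell : a\in\cA\setminus\{\ell\}\}\cup\{\ell\}$, which has a clean description: a word $v\in\cB^*$ is in the image iff every occurrence of a letter $a\ne\ell$ in $v$ is immediately followed by $\ell$. Using this I would describe $\cL(y)$ in terms of $\cL(x)$: every factor of $y$ is, up to removing at most one letter at each end, of the form $\sigma(w)$ for a unique $w\in\cL(x)$, and conversely. For the extension-graph computation, note that $p_\sigma = \varepsilon$ and $s_\sigma = \ell$ here (every $\sigma(a)$ ends in $\ell$ except $\sigma(\ell)=\ell$ itself), so the natural ``synchronizing'' factors in $y$ are $\ell\sigma(w)$ for $w\in\cL(x)$. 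I would prove that $\ell\sigma(w)$ is bispecial in $y$ iff $w$ is bispecial in $x$, and that in that case the map on extensions sending the left extension $a$ of $w$ (in $x$) to the left extension $\ell$ (if $a=\ell$, reading $\ell\ell\sigma(w)$) resp.\ the appropriate letter, and the right extension $b$ of $w$ to the right extension given by the first letter of $\sigma(b)$, is a graph isomorphism $\cE_x(w)\cong\cE_y(\ell\sigma(w))$: indeed $aw b\in\cL(x)$ iff $\sigma(awb)=\sigma(a)\sigma(w)\sigma(b)\in\cL(y)$ iff the corresponding bi-extension of $\ell\sigma(w)$ lies in $\cL(y)$, after checking the bookkeeping of the boundary letters $\ell$. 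One must also check that factors of $y$ of the form $\ell\sigma(w)$ with $w$ \emph{not} in $\cL(x)$, or the ``shifted'' bispecial factors not of this synchronized form, are never bispecial, or have trivial (tree) extension graphs — this is where the explicit shape of $L_\ell$ is used.

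The main obstacle I expect is the careful case analysis at the boundary: because $\sigma(\ell)=\ell$ has length $1$ while the other images have length $2$, a factor of $y$ does not uniquely decompose, and one has to track whether a given occurrence of $\ell$ in $y$ is ``its own block'' $\sigma(\ell)$ or the second letter of a block $\sigma(a)=a\ell$. Handling the letter $\ell$ correctly — in particular showing that the left extensions of $\ell\sigma(w)$ in $y$ are in bijection with the left extensions of $w$ in $x$ (the extension by $\ell$ on the left of $\ell\sigma(w)$ corresponding to either the left extension $\ell$ of $w$, via $\sigma(\ell)=\ell$, or to the trailing $\ell$ of some $\sigma(a)$, and showing these do not collide) — is the delicate point, and is exactly what forces one to single out the synchronized form $\ell\sigma(w)$ rather than $\sigma(w)$ itself. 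Once this dictionary between bispecial factors and their extension graphs is established, everything else is routine.
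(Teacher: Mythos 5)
Your plan is correct and follows essentially the same route as the paper: identify the bispecial factors of $y$ as the ``synchronized'' words $L_\ell(u)\ell$ for $u\in\cL(x)$ (you write $\ell\sigma(w)$ with $\sigma(a)=a\ell$, i.e.\ you have in fact described $R_\ell$ rather than $L_\ell$, for which $p_\sigma=\ell$ and $s_\sigma=\eps$; since $\cL(L_\ell(x))=\cL(R_\ell(x))$ this slip is harmless), show $\cE_y(L_\ell(u)\ell)=\cE_x(u)$, treat the empty word separately, and deduce the statement about $L_\ell\circ\tau$ formally. The paper's proof is exactly this, and it is no more detailed than your sketch on the boundary/synchronization bookkeeping you defer.
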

\begin{proof}
Let $x$ be a bi-infinite word over $\cA$ and $\ell \in \cA$. Observe that $\cL(L_\ell(x)) = \cL(R_\ell(x))$ therefore one is dendric if and only if the other is.
Let $y = L_\ell(x)$. If a word $w \in \cL(y)$ is bispecial, then it is empty or it begins and ends with $\ell$. In the first case, $w$ is trivially dendric. In the other case, there exists $u \in \cL(x)$ such that $w = L_\ell(u) \ell$. Moreover, for any $u \in \cL(x)$,
\[
	(a, b) \in E_y(L_\ell(u) \ell) \Leftrightarrow (a, b) \in E_x(u),
\]
thus $L_\ell(u) \ell$ is dendric if and only if $u$ is. This proves that $x$ is dendric if and only if $y$ is.
\end{proof}

For a given alphabet $\cA$, let us denote by $\AR$ the monoid generated by the Arnoux-Rauzy morphisms over $\cA$.
The morphisms of $\AR$ are dendric preserving by the previous lemma. We will now prove that these are, up to a bijective coding, the only dendric preserving morphisms with domain alphabet $\cA$.

\begin{lem}\label{L:sp is empty}
Let $\sigma : \cA^* \to \cB^*$ be a dendric preserving morphism. If $s_\sigma p_\sigma = \eps$, then $\sigma$ is a bijective coding between $\cA$ and $\cB$.
\end{lem}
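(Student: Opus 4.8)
The plan is to first determine $\sigma$ on letters, then extract a rigid counting identity for the length‑$2$ factors of the images, and finally exhibit a dendric word whose image violates it unless $\sigma$ is a coding.

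First I would note that $s_\sigma p_\sigma = \eps$ forces $s_\sigma = p_\sigma = \eps$, and that $\Card \cA = \Card \cB =: k$ by Corollary~\ref{C:alphabet is constant}. Applying Proposition~\ref{P:at most one antecedent for each letter} with $p_\sigma = s_\sigma = \eps$, each $b \in \cB$ is the first letter of $\sigma(a)$ for at most one $a \in \cA$, and the last letter of $\sigma(a')$ for at most one $a' \in \cA$. Hence the maps $\phi : a \mapsto (\text{first letter of } \sigma(a))$ and $\psi : a \mapsto (\text{last letter of } \sigma(a))$ from $\cA$ to $\cB$ are injective, and thus bijective since the alphabets have the same size. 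It then suffices to prove $|\sigma(a)| = 1$ for every $a \in \cA$, for then $\sigma = \phi$ is a bijective coding between $\cA$ and $\cB$.

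Next I would use dendricity through the factor complexity at length $2$. Let $x$ be any dendric word over $\cA$. Every letter of $\cB$ occurs in $\sigma(c)$ for some letter $c \in \cA$ (by minimality of $\cB$), hence occurs in $\sigma(x)$; so $\sigma(x)$ is a dendric word over $\cB$, whence $\Card \cL_2(\sigma(x)) = 2k - 1$, and similarly $\Card \cL_2(x) = 2k - 1$. Any length‑$2$ factor of $\sigma(x)$ either lies inside a single block $\sigma(a)$, or straddles the boundary of two consecutive blocks $\sigma(a)\sigma(c)$ with $ac \in \cL_2(x)$, so
\[
    \cL_2(\sigma(x)) = \Big( \bigcup_{a \in \cA} \cL_2(\sigma(a)) \Big) \cup \{ \psi(a)\phi(c) \mid ac \in \cL_2(x) \}.
\]
Since $\phi$ and $\psi$ are bijections, $ac \mapsto \psi(a)\phi(c)$ is injective, so the second set has exactly $\Card \cL_2(x) = 2k - 1$ elements, the same number as $\cL_2(\sigma(x))$. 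Hence that set equals $\cL_2(\sigma(x))$ and in particular contains $\bigcup_{a} \cL_2(\sigma(a))$. Reading this back through the bijections, and putting $T = \{ \psi^{-1}(b)\phi^{-1}(b') \mid bb' \in \bigcup_{a} \cL_2(\sigma(a)) \} \subseteq \cA^2$ (a set not depending on $x$), the inclusion becomes $T \subseteq \cL_2(x)$. As $x$ was arbitrary, $T \subseteq \cL_2(x)$ holds for every dendric word $x$ over $\cA$.

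Finally I would contradict this. If $\sigma$ is not a coding, some $|\sigma(a_0)| \geq 2$, so $\cL_2(\sigma(a_0)) \neq \emptyset$, hence $T \neq \emptyset$; fix $\alpha\beta \in T$. Then $\alpha\beta$ would be a factor of every dendric word over $\cA$, which is false. If $\Card \cA \geq 3$, or if $\Card \cA = 2$ and $\alpha = \beta$, a renaming of the $\Card \cA$‑bonacci word does the job: that word is dendric (being Arnoux--Rauzy), and a length‑$2$ factor $cd$ of it occurs only if $c$ or $d$ equals the unique letter $\ell$ with $\ell\ell$ a factor, so renaming the letters to place $\ell$ outside $\{\alpha,\beta\}$ (resp.\ outside $\{\alpha\}$ when $\alpha = \beta$) produces a dendric word over $\cA$ with no factor $\alpha\beta$. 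If $\Card \cA = 2$ with $\alpha \neq \beta$, that is $\cA = \{\alpha,\beta\}$, the word ${}^\omega\beta \cdot \alpha^\omega$ is dendric (its only bispecial factor is $\eps$, whose extension graph is a path) and has no factor $\alpha\beta$. This contradiction shows $|\sigma(a)| = 1$ for all $a$, so $\sigma = \phi$ is a bijective coding between $\cA$ and $\cB$. I expect the crux to be the length‑$2$ counting identity — recognizing that dendricity of $\sigma(x)$ forces the boundary factors $\psi(a)\phi(c)$ alone to exhaust $\cL_2(\sigma(x))$, so every factor internal to some $\sigma(a)$ must also appear at a block boundary for every dendric input; the remaining points (that $\sigma(x)$ uses all of $\cB$, and the small‑alphabet subcase where one resorts to the non‑recurrent word ${}^\omega\beta\cdot\alpha^\omega$) are routine.
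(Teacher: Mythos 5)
Your proof is correct, but it reaches the contradiction by a genuinely different mechanism than the paper. You share the same opening: Proposition~\ref{P:at most one antecedent for each letter} with $p_\sigma=s_\sigma=\eps$ plus $\Card\cA=\Card\cB$ makes the first-letter and last-letter maps $\phi,\psi$ bijections, reducing everything to showing $|\sigma(a)|=1$. From there the paper argues locally in the extension graph of $\eps$: an internal factor $bc$ of some $\sigma(a)$ is an edge of $\cE_{\sigma(x)}(\eps)$, and choosing (via Lemma~\ref{L:counter-examples with IET}) a dendric $x$ in which $\psi^{-1}(b)\phi^{-1}(c)$ is not a factor, the image under $(\psi,\phi)$ of the path joining those two vertices in $\cE_x(\eps)$ closes a cycle with that edge; the binary case is delegated to the classification of Sturmian morphisms. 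You instead extract the global counting identity $\Card\cL_2(\sigma(x))=\Card\cL_2(x)=2\Card\cA-1$, observe that the boundary factors $\psi(a)\phi(c)$ already exhaust this count, and conclude that every internal length-$2$ factor must occur as a boundary factor of \emph{every} dendric input, which a single avoiding word refutes. This buys a uniform, self-contained treatment of the two-letter case (your verification that ${}^\omega\beta\cdot\alpha^\omega$ is dendric is correct, and the $k$-bonacci words do avoid any prescribed length-$2$ word not involving the distinguished letter, modulo the routine passage from a one-sided word to a bi-infinite word with the same language) and it avoids extension graphs entirely. The trade-off is that your argument genuinely uses neutrality of the whole image (the complexity value $2k-1$), whereas the paper's cycle argument only uses acyclicity of $\eps$ in $\sigma(x)$ together with connectedness in $x$ --- which is precisely what allows the hypotheses to be weakened later in Theorem~\ref{T:stronger version of the results}; your route would not directly yield that refinement.
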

\begin{proof}
First, remark that in the case of a two letters alphabet, it directly follows from the study of Sturmian morphisms.
For larger alphabets, it suffices to prove that all the images of letters have length one as the images of the letters will then all be different by Proposition~\ref{P:at most one antecedent for each letter}. Assume by contrary that there exist $a \in \cA$, $b, c \in \cB$ such that $bc$ is a factor of $\sigma(a)$.
Let $b'$ denote the letter such that $\sigma(b')$ ends with $b$ and $c'$ be the letter such that $\sigma(c')$ begins with $c$. Such letters exist by Proposition~\ref{P:at most one antecedent for each letter}.

Since we are on an alphabet of size at least 3, we can find a dendric bi-infinite word $x$ over $\cA$ such that $b'c'$ is not in its language (using Lemma~\ref{L:counter-examples with IET} for example).\footnote{Note that this is not true if $b' \ne c'$ and we are on an alphabet of size $2$.}
The bi-infinite word $x$ is dendric thus the vertices $b'$ on the left and $c'$ on the right are connected by a unique path in $\cE_x(\eps)$ and this path is not reduced to the edge $(b',c')$ as this edge does not exist. By Proposition~\ref{P:at most one antecedent for each letter}, it implies that $b$ and $c$ are also connected by a path in $\cE_{\sigma(x)}(\eps)$ and that this path is not reduced to the edge $(b,c)$. However, $bc$ is a factor of $\sigma(a)$ thus $(b,c)$ is an edge of $\cE_{\sigma(x)}(\eps)$ and we have a cycle, a contradiction since $\sigma(x)$ must be dendric.
\end{proof}

\begin{lem}
\label{L:can desubstitute by alpha}
Let $\sigma : \cA^* \to \cB^*$ be a dendric preserving morphism. If $|s_\sigma p_\sigma| = n > 0$, then
\begin{enumerate}
\item
	$(s_\sigma p_\sigma)_1 = (s_\sigma p_\sigma)_n =: \ell$ and it is such that $E_{\sigma(x)}(\eps) =  (\{\ell\} \times \cB) \cup (\cB \times \{\ell\})$ for any dendric bi-infinite word $x$ over $\cA$;
\item
	there exists a dendric preserving morphism $\tau : \cA^* \to \cB^*$ such that $\sigma \in \{L_\ell \circ \tau, R_\ell \circ \tau\}$. Moreover, $|s_\tau p_\tau| < |s_\sigma p_\sigma|$.
\end{enumerate}
\end{lem}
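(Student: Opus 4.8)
The plan is to analyze the word $s_\sigma p_\sigma$ and show it forces a very rigid structure on $\sigma$. First I would establish part (1). Since $|s_\sigma p_\sigma| = n > 0$, at least one of $s_\sigma, p_\sigma$ is nonempty; say $p_\sigma$ is nonempty (the other case is symmetric, exchanging left and right). By Lemma~\ref{L:equivalent definitions}, $p_\sigma$ is a prefix of $\sigma(a) p_\sigma$ for every $a$, so the first letter of $p_\sigma$ is the first letter of $\sigma(a) p_\sigma$ for every $a$; call it $\ell$. Thus $p_\sigma b$ is a prefix of $\sigma(\ell') p_\sigma$ for some $\ell'$ forces, via Proposition~\ref{P:at most one antecedent for each letter}, strong constraints on which letters can precede or follow. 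Concretely, take any dendric bi-infinite word $x$ over $\cA$ (e.g.\ a coding of an RIET via Lemma~\ref{L:counter-examples with IET}); I claim $E_{\sigma(x)}(\eps) = (\{\ell\} \times \cB) \cup (\cB \times \{\ell\})$. The key point is that $s_\sigma p_\sigma$ begins and ends with the same letter. Indeed if $p_\sigma$ and $s_\sigma$ are both nonempty, write $\ell = (p_\sigma)_1$ and $\ell'' = (s_\sigma)_{|s_\sigma|}$; one uses that $\sigma(a)$ always has $p_\sigma$ as a prefix and $s_\sigma$ as a suffix for long enough products (Lemma~\ref{L:equivalent definitions}), together with Proposition~\ref{P:at most one antecedent for each letter} applied to the boundary letters, to see that the only factors $bc \in \cL_2(\sigma(x))$ that can occur across a letter boundary are those with $b = \ell''$ or $c = \ell$ — and dendricity of $\sigma(x)$, i.e.\ acyclicity of $\cE_{\sigma(x)}(\eps)$, then forces $\ell'' = \ell$, since otherwise one can build a cycle using two distinct antecedent-letter pairs (as in the proof of Proposition~\ref{P:at most one antecedent for each letter}). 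Once $\ell$ is well-defined as both endpoints, the bi-extension graph of $\eps$ in $\sigma(x)$ is a tree whose only possible edges meet the vertex $\ell$ on at least one side; since it must be connected and span all of $\cB$ on both sides, it is exactly the union of the two stars centered at $\ell$, giving (1).

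For part (2), knowing $E_{\sigma(x)}(\eps) = (\{\ell\}\times\cB)\cup(\cB\times\{\ell\})$ for all dendric $x$ means that in $\sigma(x)$, every occurrence of a letter $b \ne \ell$ is immediately preceded by $\ell$, or every such $b$ is immediately followed by $\ell$ — and in fact the star structure says: every $b \ne \ell$ is preceded by $\ell$ only (since $(\cdot, b)$-edges for $b\ne\ell$ must hit $\ell$ on the left), OR symmetrically every $b\ne\ell$ is followed by $\ell$. Suppose we are in the "preceded by $\ell$" case. Then I would argue that $\sigma(a)$, for each $a \in \cA$, has the property that every letter $\ne \ell$ in it is immediately preceded (within $\sigma(a)$, or at the boundary) by $\ell$; combined with the boundary behaviour governed by $p_\sigma$, this should let me factor $\sigma = L_\ell \circ \tau$ where $\tau$ is obtained by deleting the leading $\ell$'s — formally, define $\tau(a)$ by replacing in $\sigma(a)$ each $\ell b$ (with $b \ne \ell$) by $b$ and appropriately handling runs of $\ell$'s, so that $L_\ell(\tau(a)) = \sigma(a)$. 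One must check $\tau$ is well-defined, non-erasing, and that $\cB$ remains minimal for $\tau$ (which holds since $L_\ell$ is a bijective-language operation). In the "followed by $\ell$" case one gets $\sigma = R_\ell \circ \tau$ instead.

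Then $\tau$ is dendric preserving: by Lemma~\ref{L:stability of dendric preserving}, $\tau$ is dendric preserving iff $L_\ell \circ \tau = \sigma$ is, which it is by hypothesis. Finally I must show $|s_\tau p_\tau| < |s_\sigma p_\sigma|$. This is where I expect the bookkeeping to be most delicate. The point is that $\sigma = L_\ell \circ \tau$ prepends an $\ell$ to the image of each non-$\ell$ letter, so $p_\sigma$ "absorbs" the $\ell$ coming from $L_\ell$ plus whatever common prefix $p_\tau$ contributes, and a direct computation — using that $p_\sigma$ is the longest common prefix of the $\sigma(a)^\omega$ and that $L_\ell(\tau(a))^\omega = L_\ell(\tau(a)^\omega)$ — should give $p_\sigma = \ell \cdot L_\ell(p_\tau) \cdot (\text{something})$ strictly longer than $L_\ell$ "can account for" from $p_\tau$ alone, forcing $|p_\tau| < |p_\sigma|$ (and $s_\tau = s_\sigma$, or the symmetric statement with the roles of $p$ and $s$ swapped in the $R_\ell$ case). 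The main obstacle will be carefully tracking how $p_\sigma$ and $s_\sigma$ decompose through the composition with $L_\ell$ (resp.\ $R_\ell$) at the letter boundaries, to nail the strict inequality; everything else is a fairly direct consequence of Proposition~\ref{P:at most one antecedent for each letter}, the counter-example construction of Lemma~\ref{L:counter-examples with IET}, and the tree structure of $\cE_{\sigma(x)}(\eps)$.
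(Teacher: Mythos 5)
There is a genuine error at the heart of your argument: you identify the distinguished letter $\ell$ as $(p_\sigma)_1$ (the common first letter of all the $\sigma(a)$) and claim it equals $(s_\sigma)_{|s_\sigma|}$ (the common last letter). But the letter in the statement is $(s_\sigma p_\sigma)_1 = (s_\sigma p_\sigma)_n$, i.e.\ the \emph{first} letter of $s_\sigma$ and the \emph{last} letter of $p_\sigma$ --- the two endpoints of the seam word $s_\sigma p_\sigma$, not the two letters adjacent to the block boundary. These are different letters in general, and your claimed equality $(p_\sigma)_1 = (s_\sigma)_{|s_\sigma|}$ is simply false. Concretely, take $\sigma = R_0 \circ L_1$ on $\{0,1\}$, so $\sigma(0)=100$, $\sigma(1)=10$; this is Sturmian, hence dendric preserving. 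Here $p_\sigma = 10$, $s_\sigma = 0$, so $s_\sigma p_\sigma = 010$ and the lemma's $\ell$ is $0$, while $(p_\sigma)_1 = 1 \neq 0 = (s_\sigma)_{|s_\sigma|}$. Your conclusion would assert that every length-$2$ factor of $\sigma(x)$ contains the letter $1$, whereas $00 \in \cL_2(\sigma(x))$ and $11 \notin \cL_2(\sigma(x))$; and $\sigma$ does not factor through $L_1$ or $R_1$, only through $R_0$. The error propagates through all of part (2), since the desubstitution is keyed to the wrong letter. The correct identification comes from Proposition~\ref{P:at most one antecedent for each letter} together with $\Card\cA = \Card\cB$: the map sending $a$ to the unique $b$ with $p_\sigma b$ a prefix of $\sigma(a)p_\sigma$ is a bijection, so the \emph{last} letter of $p_\sigma$ has $E^+ = \cB$ (and dually the first letter of $s_\sigma$ has $E^- = \cB$); these two full stars already account for all $2\Card\cB - 1$ edges of the spanning tree $\cE_{\sigma(x)}(\eps)$, which forces the two centers to coincide (else one of them could only be followed by itself).

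A secondary gap, independent of the above: to prove $E_{\sigma(x)}(\eps)$ is the union of two stars you only discuss length-$2$ factors occurring ``across a letter boundary'' and never control the factors internal to the images $\sigma(a)$, which a priori could be arbitrary. The edge-counting argument just described handles all length-$2$ factors simultaneously, which is why it is needed. Your treatment of the strict inequality $|s_\tau p_\tau| < |s_\sigma p_\sigma|$ is also off: the clean identities are $s_\sigma p_\sigma = L_\ell(s_\tau p_\tau)\,\ell$ in the $L_\ell$ case and $s_\sigma p_\sigma = \ell\, R_\ell(s_\tau p_\tau)$ in the $R_\ell$ case (your guess $s_\tau = s_\sigma$ fails, e.g., in the example above where $s_\sigma = 0$ but $s_\tau = \eps$). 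The overall skeleton --- use Proposition~\ref{P:at most one antecedent for each letter} and RIET counterexamples, extract the star structure, desubstitute by $L_\ell$ or $R_\ell$, track $s_\sigma p_\sigma$ through the composition --- matches the paper, but as written the proof does not go through.
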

\begin{proof}
\hfill
\begin{enumerate}
\item
	Let $x$ be a dendric bi-infinite word over $\cA$. By Proposition~\ref{P:at most one antecedent for each letter} and since $\Card \cA = \Card \cB$, we know that $E^+_{\sigma(x)}((s_\sigma p_\sigma)_n) = \cB$ and that $E^-_{\sigma(x)}((s_\sigma p_\sigma)_1) = \cB$. We can deduce that $E_{\sigma(x)}(\eps) = \left( \{(s_\sigma p_\sigma)_n\} \times \cB \right) \cup \left(\cB \times \{(s_\sigma p_\sigma)_1\}\right)$ since $\eps$ is dendric in $\sigma(x)$.
	Moreover, $(s_\sigma p_\sigma)_1 = (s_\sigma p_\sigma)_n$. Indeed, otherwise $(s_\sigma p_\sigma)_1$ is not right special and can only be followed by itself in $\sigma(x)$. This will contradict the fact that $(s_\sigma p_\sigma)_n$ appears at bounded intervals in $\sigma(x)$.	
	
\item
	Assume that $p_\sigma \ne \eps$ and that $(p_\sigma)_1 = \ell$. Thus, for each letter $a \in \cA$, $\sigma(a)$ begins with $\ell$. Moreover, by the first item, any letter other than $\ell$ can only be followed by $\ell$ in $\sigma(a)$ thus we can find $u$ such that $\sigma(a) = L_\ell(u)$. We then define $\tau$ such that $\sigma = L_\ell \circ \tau$. Remark that, by maximality of $s_\sigma$ and $p_\sigma$, we have $s_\sigma p_\sigma = L_\ell(s_\tau p_\tau) \ell$.
	
	If $p_\sigma = \eps$ or $(p_\sigma)_1 \ne \ell$, then we first show that, for each letter $a \in \cA$, $\sigma(a)$ ends with $\ell$.
	Using the first item, for any dendric bi-infinite word $x$, the letter $\ell$ appears in every factor of length 2 of $\sigma(x)$. Therefore, it suffices to prove that, for each letter $a$, we can find a dendric bi-infinite word $x$ over $\cA$ such that $ab \in \cL(x)$ where $\sigma(b)$ does not begin with $\ell$.
	This is always possible. Indeed, if $p_\sigma = \eps$, by Proposition~\ref{P:at most one antecedent for each letter}, there is exactly one letter $a_0 \in \cA$ such that $\sigma(a_0)$ begins with $\ell$ and if $(p_\sigma)_1 \ne \ell$, there is no such letter. Thus, we can simply take $x$ such that $a$ is right special to conclude that $\sigma(a)$ ends with $\ell$.
	Similarly to what we did previously, we can now define $\tau$ such that $\sigma = R_\ell \circ \tau$. Remark that, $s_\sigma p_\sigma = \ell R_\ell(s_\tau p_\tau)$.
	
	In both cases, $\tau$ is dendric preserving by Lemma~\ref{L:stability of dendric preserving} and $|s_\tau p_\tau| < |s_\sigma p_\sigma|$.
\end{enumerate}
\end{proof}

\begin{thm}
\label{T:characterization dendric preserving}
A non-erasing morphism $\sigma : \cA^* \to \cB^*$ is dendric preserving if and only if it is, up to a bijective coding, in the monoid $\AR$ generated by the Arnoux-Rauzy morphisms.
\end{thm}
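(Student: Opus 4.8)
The plan is to prove both implications of the characterization. The "if" direction is essentially immediate: Lemma~\ref{L:stability of dendric preserving} shows that each Arnoux-Rauzy morphism $L_\ell$ and $R_\ell$ is dendric preserving and, more importantly, that precomposing (on the left) by an Arnoux-Rauzy morphism preserves the property of being dendric preserving. A trivial check shows that a bijective coding is dendric preserving, and a straightforward induction on the number of Arnoux-Rauzy factors then shows that every element of $\AR$, and every such element followed by a bijective coding, is dendric preserving. So I would dispatch this direction in one or two sentences.

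For the "only if" direction, I would argue by strong induction on $|s_\sigma p_\sigma|$. Let $\sigma$ be dendric preserving. By Lemma~\ref{L:existence of s_sigma} (or rather the remark that for a dendric preserving morphism $p_\sigma$ and $s_\sigma$ are well defined — note $\sigma(x)$ cannot be periodic), both $p_\sigma$ and $s_\sigma$ exist. If $|s_\sigma p_\sigma| = 0$, i.e. $s_\sigma p_\sigma = \eps$, then Lemma~\ref{L:sp is empty} tells us $\sigma$ is a bijective coding, which is in $\AR$ up to a bijective coding (it \emph{is} a bijective coding), so the base case holds. If $|s_\sigma p_\sigma| = n > 0$, then Lemma~\ref{L:can desubstitute by alpha} gives a letter $\ell$ and a dendric preserving morphism $\tau : \cA^* \to \cB^*$ with $\sigma \in \{L_\ell \circ \tau, R_\ell \circ \tau\}$ and $|s_\tau p_\tau| < |s_\sigma p_\sigma|$. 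By the induction hypothesis, $\tau$ is, up to a bijective coding, in $\AR$; say $\tau = \mu \circ \gamma$ with $\mu \in \AR$ and $\gamma$ a bijective coding (here I would use that a bijective coding composed on the right with an element of $\AR$ can always be rewritten, after conjugating the Arnoux-Rauzy morphisms by the corresponding permutation as noted just after the definition of the Arnoux-Rauzy morphisms, so that the bijective coding ends up on the outside). Then $\sigma = S_\ell \circ \mu \circ \gamma$ with $S_\ell \in \{L_\ell, R_\ell\}$, so $\sigma$ is $S_\ell \circ \mu \in \AR$ up to the bijective coding $\gamma$, completing the induction.

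The only genuinely delicate point in assembling this is the bookkeeping of "up to a bijective coding": one must be careful that the induction hypothesis produces the bijective coding on the correct side, or alternatively state the induction for the slightly more general claim that $\sigma$ equals an element of $\AR$ possibly composed with a bijective coding on a fixed side. Using the identity $\pi^{-1} S_\ell \pi = S_{\pi^{-1}(\ell)}$ recalled after the definition of the Arnoux-Rauzy morphisms, one moves a bijective coding freely past a product of Arnoux-Rauzy morphisms, at the cost of permuting the distinguished letters, so this is routine but should be spelled out. Everything else is a direct appeal to the lemmas already proved, so I expect no real obstacle beyond this conjugation bookkeeping; the substantive combinatorial work (the planarity/cycle arguments producing $p_\sigma$, $s_\sigma$ and the desubstitution) has all been done in Lemmas~\ref{L:sp is empty} and~\ref{L:can desubstitute by alpha}.
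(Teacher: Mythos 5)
Your proposal is correct and follows essentially the same route as the paper: the "if" direction via Lemma~\ref{L:stability of dendric preserving}, and the "only if" direction by induction on $|s_\sigma p_\sigma|$, with Lemma~\ref{L:sp is empty} as the base case and Lemma~\ref{L:can desubstitute by alpha} as the inductive step. The paper disposes of the bijective-coding bookkeeping up front (invoking Corollary~\ref{C:alphabet is constant} to assume $\cA = \cB$), whereas you carry it through the induction using the conjugation identity $\pi^{-1} S_\ell \pi = S_{\pi^{-1}(\ell)}$; both are fine.
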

\begin{proof}
The fact that such a morphism is dendric preserving is a direct consequence of Lemma~\ref{L:stability of dendric preserving}.
Assume now that $\sigma : \cA^* \to \cB^*$ is dendric preserving. By Corollary~\ref{C:alphabet is constant}, $\Card \cA = \Card \cB$ thus, up to a bijective coding, we can assume that $\cA = \cB$. The conclusion follows from iterating Lemma~\ref{L:can desubstitute by alpha} and from Lemma~\ref{L:sp is empty}.
\end{proof}

As in Remark~\ref{R:OK for interval exchanges 1}, a careful analysis of the proofs of the previous results shows that the hypotheses can be reduced. More specifically, we have actually shown the following result.

\begin{thm}\label{T:stronger version of the results}
Let $\cA$ and $\cB$ be such that $\Card \cB \leq \Card \cA$ and let $\sigma : \cA^* \to \cB^*$ be a non-erasing and non-periodic morphism, i.e. there exists $x \in \cA^\Z$ such that $\sigma(x)$ is not periodic. If there exists a family $F \subseteq \cA^\Z$ such that
\begin{enumerate}
\item
    for all distinct $a,b,c \in \cA$ and all distinct $a',b' \in \cA$, there exists $x \in F$ such that $aa', ba', bb', cb'$ are factors of $x$, or $ab', bb', ba', ca'$ are factors of $x$,
\item
    for all distinct $a,b \in \cA$ and all distinct $a',b', c' \in \cA$, there exists $x \in F$ such that $aa', ab', bb', bc'$ are factors of $x$, or $ba', bb', ab', ac'$ are factors of $x$,
\item
    for all $a, b \in \cA$, there exists $x \in F$ such that $ab$ is not a factor of $x$ but, in $\cE_x(\eps)$, the left vertex labeled by $a$ and the right vertex labeled by $b$ are connected,
\end{enumerate}
and if the factors of $s_\sigma p_\sigma$ are acyclic in $\sigma(x)$ for all $x \in F$, then $\sigma$ is, up to a bijective coding, in the monoid $\AR$.
\end{thm}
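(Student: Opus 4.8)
The plan is to revisit the chain of arguments running from Proposition~\ref{P:at most one antecedent for each letter} to Theorem~\ref{T:characterization dendric preserving} and to check that every appeal to ``$\sigma(x)$ is dendric'' (for a dendric word $x$, typically an RIET coding supplied by Lemma~\ref{L:counter-examples with IET}) can be replaced by the two weaker inputs now available: the assumed acyclicity, in $\sigma(x)$, of $\cE_{\sigma(x)}(w)$ for every factor $w$ of $s_\sigma p_\sigma$ and every $x\in F$; and the prescribed factor and connectivity patterns of conditions~1--3. Two preliminaries are needed. Since $\sigma$ is non-periodic, the contrapositive of Lemma~\ref{L:existence of s_sigma} shows that $p_\sigma$ and $s_\sigma$ are defined; and since $p_\sigma$ is a proper prefix of $\sigma(a)p_\sigma$ for every $a$ (Lemma~\ref{L:equivalent definitions}, $\sigma$ non-erasing), there is a well-defined letter $b_a\in\cB$ with $p_\sigma b_a$ a prefix of $\sigma(a)p_\sigma$, and dually a letter $c_a$ with $c_a s_\sigma$ a suffix of $s_\sigma\sigma(a)$.

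The first step re-proves Proposition~\ref{P:at most one antecedent for each letter}. If some $b\in\cB$ had two distinct prefix-antecedents, then, adjoining the letters provided by maximality of $p_\sigma$ and of $s_\sigma$ exactly as in the original proof, one of the two forms of condition~2 (the ``or'' recording which suffix-antecedent is used to close the cycle) yields $x\in F$ whose length-$2$ factors force a $4$-cycle in $\cE_{\sigma(x)}(s_\sigma p_\sigma)$, contradicting the assumed acyclicity; the symmetric case of two suffix-antecedents is handled identically with condition~1. Hence $a\mapsto b_a$ is injective, so $\Card\cA\le\Card\cB$, and together with the hypothesis $\Card\cB\le\Card\cA$ this gives $\Card\cA=\Card\cB$; after a bijective coding we may assume $\cA=\cB$. (This is where the hypothesis on the alphabet sizes takes the place of Corollary~\ref{C:alphabet cannot grow}.)

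The remaining steps proceed by induction on $|s_\sigma p_\sigma|$. If $s_\sigma p_\sigma=\eps$, I would run the proof of Lemma~\ref{L:sp is empty}: assuming some $\sigma(a)$ has a factor $bc$, Proposition~\ref{P:at most one antecedent for each letter} names the letters $b',c'$ with $\sigma(b')$ ending in $b$ and $\sigma(c')$ beginning with $c$; condition~3 supplies $x\in F$ in which $b'c'$ is not a factor but $b'$ (left) and $c'$ (right) are connected in $\cE_x(\eps)$, and, since the maps ``last letter of $\sigma(\cdot)$'' and ``first letter of $\sigma(\cdot)$'' are injective, this path together with the edge from $bc$ lifts to a cycle in $\cE_{\sigma(x)}(\eps)$, a contradiction; so every $\sigma(a)$ is a letter and, with $\Card\cA=\Card\cB$, $\sigma$ is a bijective coding. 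If $|s_\sigma p_\sigma|>0$, I would reproduce Lemma~\ref{L:can desubstitute by alpha}: acyclicity of $\cE_{\sigma(x)}(\eps)$ for suitable $x\in F$, together with the bijectivity of $a\mapsto b_a$ and $a\mapsto c_a$, forces the ``double-star'' shape of $\cE_{\sigma(x)}(\eps)$, hence $(s_\sigma p_\sigma)_1=(s_\sigma p_\sigma)_n=:\ell$, and then --- using the acyclicity again and conditions~1--2 to produce the dendric-like witnesses of the original proof (in particular words in which a given letter is right-special) --- one obtains $\sigma=L_\ell\circ\tau$ or $\sigma=R_\ell\circ\tau$ with $|s_\tau p_\tau|<|s_\sigma p_\sigma|$. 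It then suffices to see that $\tau$ again satisfies the hypotheses: it is non-erasing and non-periodic (else $\sigma$ would be), conditions~1--3 concern only $\cA$ and so are unchanged, and, writing $s_\sigma p_\sigma=L_\ell(s_\tau p_\tau)\ell$ and $\sigma(x)=L_\ell(\tau(x))$ (and dually for $R_\ell$), for every factor $w$ of $s_\tau p_\tau$ the word $L_\ell(w)\ell$ is a factor of $s_\sigma p_\sigma$ and $\cE_{\tau(x)}(w)=\cE_{\sigma(x)}(L_\ell(w)\ell)$ by the identity established inside the proof of Lemma~\ref{L:stability of dendric preserving}, so the acyclicity hypothesis transfers. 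Iterating down to the base case yields $\sigma=S_1\circ\cdots\circ S_k\circ\rho$ with the $S_i$ Arnoux--Rauzy morphisms and $\rho$ a bijective coding, that is, $\sigma$ lies in $\AR$ up to a bijective coding.

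There is no new idea here; the work is in the bookkeeping. One must identify, for each counterexample construction in the original proofs, exactly which extension graph and which letters it uses, and confirm that each of conditions~1--3 (with its ``or'' alternative) really does supply a witness in $F$ with the needed length-$2$ factors or connectivity. The most delicate point is that several steps of Lemma~\ref{L:can desubstitute by alpha} rely on $E^-_{\sigma(x)}(\eps)=E^+_{\sigma(x)}(\eps)=\cB$, so one must make sure the chosen witnesses carry enough letters of $\cA$; it is precisely to guarantee this, and to separate the prefix- and suffix-failure cases, that conditions~1 and~2 are stated with five letters and in two symmetric forms.
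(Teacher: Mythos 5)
Your proposal is correct and follows essentially the same route as the paper: establish the existence of $s_\sigma$ and $p_\sigma$ from non-periodicity via Lemma~\ref{L:existence of s_sigma}, re-derive Proposition~\ref{P:at most one antecedent for each letter} from items~1--2 and the acyclicity of $\cE_{\sigma(x)}(s_\sigma p_\sigma)$, get $\Card\cA=\Card\cB$ from the size hypothesis, and then iterate Lemmas~\ref{L:sp is empty} and~\ref{L:can desubstitute by alpha} using item~3 and acyclicity of $\eps$, with the hypothesis transferring to $\tau$ through the identity $\cE_{\tau(x)}(u)=\cE_{\sigma(x)}(v)$. Your write-up is in fact more explicit than the paper's sketch (e.g.\ in matching the two-prefix-antecedent case to condition~2 and the two-suffix-antecedent case to condition~1), but the argument is the same.
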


\begin{proof}
Indeed, the existence of $s_\sigma$ and $p_\sigma$ is guaranteed by Lemma~\ref{L:existence of s_sigma}. Proposition~\ref{P:at most one antecedent for each letter} then derives from item~1, item~2 and the fact that $s_\sigma p_\sigma$ is acyclic in the image, and Corollary~\ref{C:alphabet is constant} directly follows since $\Card \cB \leq \Card \cA$.
For Lemma~\ref{L:sp is empty}, we use item~3 and the acyclicity of $\eps$ in the image. In Lemma~\ref{L:can desubstitute by alpha}, the existence of the morphism $\tau$ is a consequence of the acyclicity of $\eps$ and of item~3. Moreover, for any bispecial factor $u$ of $s_\tau p_\tau$, there is a corresponding factor $v$ of $s_\sigma p_\sigma$ such that $\cE_{\tau(x)}(u) = \cE_{\sigma(x)}(v)$ for any word $x$ (see the proof of Lemma~\ref{L:stability of dendric preserving}) thus, for all $x \in F$, the factors of $s_\tau p_\tau$ are acyclic in $\tau(x)$ and $\tau$ also satisfies the conditions.
\end{proof}

Remark that, as a consequence of Lemma~\ref{L:counter-examples with IET}, the family of codings of RIET on an alphabet of size at least 3 satisfies these properties, a fact that we already used to obtain the characterization of dendric preserving morphisms.
We will now look at the morphisms preserving the codings of RIET, called \emph{RIET preserving} for short.

%

As a consequence of the combinatorial characterization of RIET (Theorem~\ref{T:Fer-Zam}), we directly obtain the following result which is a particular case of the stability of interval exchange transformations under induction~\cite{Rauzy1979}.

\begin{cor}\label{C:derivation of IET under AR}
If $L_\ell(x)$ (resp., $R_\ell(x)$) is the coding of a regular interval exchange transformation for the orders $\binom{\leq}{\preceq}$, then $x$ is also the coding of a regular interval exchange transformation for the same orders.
\end{cor}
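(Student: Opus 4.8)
The plan is to prove this through the combinatorial characterization of Theorem~\ref{T:Fer-Zam}: to show that $x$ is the coding of a regular interval exchange transformation for $\binom{\leq}{\preceq}$, it suffices to check that $x$ is recurrent and that conditions~\ref{item:IE non crossing edges} and~\ref{item:IE intersection} of that theorem hold for every $w \in \cL(x)$ with respect to the \emph{same} pair of orders. First I would dispose of the ``$R_\ell$'' case: as recalled in the proof of Lemma~\ref{L:stability of dendric preserving}, $\cL(R_\ell(x)) = \cL(L_\ell(x))$, and since (by Theorem~\ref{T:Fer-Zam}) the property of coding such a transformation depends only on the language together with recurrence, while languages of codings of RIET are uniformly recurrent, the words $R_\ell(x)$ and $L_\ell(x)$ have this property simultaneously. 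So it is enough to treat $y := L_\ell(x)$ and assume that $y$ codes an RIET for $\binom{\leq}{\preceq}$.

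The engine of the argument is the identity extracted in the proof of Lemma~\ref{L:stability of dendric preserving}: for every $u \in \cL(x)$ the word $L_\ell(u)\ell$ lies in $\cL(y)$ and $(a,b) \in E_y(L_\ell(u)\ell)$ if and only if $(a,b) \in E_x(u)$. Thus $\cE_x(u)$ and $\cE_y(L_\ell(u)\ell)$ are literally the same labelled bipartite graph, with vertex sets contained in $\cA$ on both sides; in particular $E^-_x(u) = E^-_y(L_\ell(u)\ell)$. Consequently condition~\ref{item:IE non crossing edges} for $x$ at $w$ is exactly condition~\ref{item:IE non crossing edges} for $y$ at $L_\ell(w)\ell$, which holds because $y$ codes an RIET for $\binom{\leq}{\preceq}$.

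For condition~\ref{item:IE intersection} I would first upgrade the identity above to right extensions of one-letter-longer words. If $a \in E^-_x(w)$ then $aw \in \cL(x)$, so applying the identity to $aw$ gives $E^+_x(aw) = E^+_y(L_\ell(a)\,L_\ell(w)\ell)$; and since in $y$ every letter $d \ne \ell$ occurs only as the second letter of a block $L_\ell(d) = \ell d$, while $L_\ell(\ell) = \ell$, prepending $\ell$ to $a\,L_\ell(w)\ell$ does not change its set of right extensions in $y$, whence $E^+_x(aw) = E^+_y(a\,L_\ell(w)\ell)$. Combined with $E^-_x(w) = E^-_y(L_\ell(w)\ell)$, this makes condition~\ref{item:IE intersection} for $x$ at $w$ coincide with condition~\ref{item:IE intersection} for $y$ at $L_\ell(w)\ell$, which again holds.

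It remains to see that $x$ is recurrent. Since $y$ codes an RIET it is minimal, hence uniformly recurrent; moreover every occurrence of $\ell$ in $y$ opens a block (a block is $\ell$ or $\ell c$ with $c \ne \ell$), so the block decomposition ``cut before each $\ell$'' is the unique one, and any occurrence of $L_\ell(w)\ell$ in $y$ is aligned on block boundaries and hence comes from an occurrence of $w$ in $x$; as $L_\ell(w)\ell$ recurs in $y$ and $L_\ell$ is non-erasing, $w$ recurs in $x$. Theorem~\ref{T:Fer-Zam} then gives the claim. I expect the only genuine work to be the two routine-but-fiddly block-decomposition verifications used here, namely the equality $E^+_x(aw) = E^+_y(a\,L_\ell(w)\ell)$ and the recognizability bookkeeping for recurrence; all the conceptual content already sits in Lemma~\ref{L:stability of dendric preserving} and Theorem~\ref{T:Fer-Zam}, so the proof is essentially a transfer of the characterization across $L_\ell$.
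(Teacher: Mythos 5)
Your proposal is correct and follows exactly the route the paper indicates: the paper gives no written-out proof, simply declaring the corollary a direct consequence of the combinatorial characterization in Theorem~\ref{T:Fer-Zam} (and noting it is a special case of stability of IETs under Rauzy induction), and your argument fleshes out precisely that transfer, using the extension-graph identity from the proof of Lemma~\ref{L:stability of dendric preserving} together with the synchronization of the block decomposition. The details you supply (the equality $E^+_x(aw) = E^+_y(a\,L_\ell(w)\ell)$, the recurrence transfer, and the reduction of the $R_\ell$ case to $L_\ell$ via equality of languages) all check out.
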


We now obtain a complete description of RIET preserving morphisms.

\begin{thm}
A non-erasing morphism $\sigma : \cA^* \to \cB^*$ preserves the codings of regular interval exchange transformations if and only if we are in one of the two following cases:
\begin{enumerate}
\item 
    $\Card \cA = 2$ and $\sigma$ is a Sturmian morphism,
\item
    $\Card \cA \geq 3$ and $\sigma$ is a bijective coding.
\end{enumerate}
\end{thm}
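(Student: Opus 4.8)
The plan is to handle the two directions separately, using the combinatorial characterization of codings of RIET (Theorem~\ref{T:Fer-Zam}) together with the structural result already established for dendric preserving morphisms (Theorem~\ref{T:characterization dendric preserving}). First observe that any RIET preserving morphism is in particular dendric preserving: every coding of an RIET is dendric, and by the remark following Theorem~\ref{T:Fer-Zam} the image of such a coding, being again the coding of an RIET, is dendric. Hence, by Theorem~\ref{T:characterization dendric preserving} and Corollary~\ref{C:alphabet is constant}, $\Card\cA = \Card\cB$ and, up to a bijective coding, $\sigma$ lies in the monoid $\AR$, so $\sigma = S_1 \circ \cdots \circ S_r$ for Arnoux-Rauzy morphisms $S_i \in \{L_{\ell_i}, R_{\ell_i}\}$. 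This reduces the problem to understanding which products of Arnoux-Rauzy morphisms preserve codings of RIET.

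For the ``if'' direction, when $\Card\cA = 2$ a Sturmian morphism preserves Sturmian words, and on a binary alphabet codings of RIET are exactly the Sturmian words (being recurrent, since regularity forces aperiodicity and minimality); a bijective coding trivially preserves the RIET property by composing the pair of orders with the bijection. When $\Card\cA \geq 3$ a bijective coding is again trivially RIET preserving.

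For the ``only if'' direction with $\Card\cA \geq 3$, I would argue that no nontrivial Arnoux-Rauzy factor can appear. The key point is Corollary~\ref{C:derivation of IET under AR}: if $\sigma(x)$ is a coding of an RIET for the pair $\binom{\leq}{\preceq}$ and $\sigma = L_\ell \circ \tau$ (or $R_\ell\circ\tau$), then $\tau(x)$ is also such a coding for the \emph{same} pair of orders. So it suffices to rule out a single outermost Arnoux-Rauzy morphism. Suppose $\sigma = L_\ell \circ \tau$ with $\tau$ RIET preserving. Choose a coding $x$ of an RIET over $\cA$ (using Lemma~\ref{L:counter-examples with IET}) realizing a bi-extension graph of $\eps$ which is a tree that is \emph{not} planar for any pair of orders in which $\ell$ is simultaneously the $\preceq$-maximal and the $\leq$-maximal letter — equivalently, by item~1 of Lemma~\ref{L:can desubstitute by alpha}, choose $x$ so that $\tau(x)$ forces $E_{\sigma(x)}(\eps) = (\{\ell\}\times\cB)\cup(\cB\times\{\ell\})$, a ``double star'' centered at $\ell$. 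Such a star is planar only for the orders making $\ell$ extremal on both sides, and one checks that condition~\ref{item:IE intersection} of Theorem~\ref{T:Fer-Zam} applied to the extension graph of $\eps$ in $\sigma(x)$ cannot be satisfied: taking two consecutive left extensions $a_1, a_2$ for $\preceq$, their common right-extension set $E^+_{\sigma(x)}(a_1) \cap E^+_{\sigma(x)}(a_2)$ must be a singleton, but in the double star every left vertex distinct from $\ell$ has right-extension set exactly $\{\ell\}$, so the intersection of two such sets is $\{\ell\}$ for \emph{every} pair of left extensions, which (together with $|\cA|\geq 3$) is incompatible with the orbit-disjointness built into regularity — concretely, it forces two distinct discontinuities of $T$ to share an orbit. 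This contradiction shows $r = 0$, i.e.\ $\sigma$ is a bijective coding.

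The main obstacle I anticipate is the last contradiction: turning ``the extension graph of $\eps$ is a double star centered at $\ell$'' into a violation of regularity. The cleanest route is probably to bypass $T$ entirely and argue purely combinatorially via Theorem~\ref{T:Fer-Zam}: the double star is planar only for orders with $\ell$ maximal for both $\leq$ and $\preceq$, but then for \emph{any} two consecutive-for-$\preceq$ left letters $a_1 \prec a_2$ (neither equal to $\ell$, which exists since $|\cA|\geq 3$), condition~\ref{item:IE intersection} demands $E^+_{\sigma(x)}(a_1\eps)\cap E^+_{\sigma(x)}(a_2\eps)$ be a singleton; but choosing $x$ (again via Lemma~\ref{L:counter-examples with IET}) so that additionally some longer bispecial factor of $\sigma(x)$ witnesses $|E^+(a_1)\cap E^+(a_2)| \neq 1$, or simply noting that the singleton must vary with the pair while it is always forced to be $\{\ell\}$, contradicts the ``only one pair of orders'' uniqueness clause of Theorem~\ref{T:Fer-Zam} when combined with a second, incompatible choice of $x \in F$. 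I would present whichever of these two contradictions (regularity-of-$T$ versus combinatorial-characterization) admits the shorter rigorous write-up, and I expect the combinatorial one to be preferable since it stays within the framework already developed in the paper.
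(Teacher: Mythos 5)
Your overall architecture matches the paper's (reduce to the Arnoux--Rauzy monoid, then use Corollary~\ref{C:derivation of IET under AR} to strip off one outermost $L_\ell$ or $R_\ell$ and exhibit a single RIET coding whose image fails the Ferenczi--Zamboni conditions), but two steps do not go through as written. First, ``RIET preserving implies dendric preserving'' is not immediate: your argument only shows that $\sigma$ sends \emph{codings of RIET} to dendric words, whereas dendric preserving (as defined in the paper) requires $\sigma$ to send \emph{every} dendric bi-infinite word over $\cA$ to a dendric word, so Theorem~\ref{T:characterization dendric preserving} does not apply directly. This is precisely the gap that Theorem~\ref{T:stronger version of the results} is designed to close: the family of RIET codings satisfies its hypotheses, and acyclicity of the factors of $s_\sigma p_\sigma$ in the images is all that is needed to conclude that $\sigma$ lies in $\AR$ up to a bijective coding.

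Second, and more seriously, your final contradiction fails. In the double star $E_{\sigma(x)}(\eps) = (\{\ell\}\times\cB)\cup(\cB\times\{\ell\})$, every left vertex $a \ne \ell$ has right-extension set exactly $\{\ell\}$, so for any two consecutive left letters the intersection of their right-extension sets is $\{\ell\}$ --- which \emph{is} a singleton, so condition~\ref{item:IE intersection} of Theorem~\ref{T:Fer-Zam} is satisfied, not violated; nothing in that theorem requires the singleton to vary with the pair, and no violation of regularity follows. Indeed the double star \emph{can} occur as the extension graph of $\eps$ for a coding of an RIET: it is planar for the pairs of orders in which $\ell$ is extremal for both orders (maximal for one and minimal for the other, not maximal for both as you state). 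The point you are missing is that the orders are not free: one must \emph{choose} the input $x$ to be a coding of an RIET for a pair $\binom{\leq}{\preceq}$ in which $\ell$ is neither maximal nor minimal for $\leq$ (possible since $\Card\cA\geq 3$). The extension graphs of the non-empty bispecial factors of $L_\ell(x)$ coincide with those of the bispecial factors of $x$ (proof of Lemma~\ref{L:stability of dendric preserving}), so by the uniqueness clause of Theorem~\ref{T:Fer-Zam} the only pair of orders that could witness $L_\ell(x)$ being an RIET coding is $\binom{\leq}{\preceq}$ itself; and for that pair the double star at $\eps$ violates the planarity condition~\ref{item:IE non crossing edges}, because planarity forces $\ell$ to be extremal for $\leq$. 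That is the contradiction the paper uses.
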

\begin{proof}
Using Theorem~\ref{T:stronger version of the results}, it only remains to prove that, on an alphabet of size greater or equal to $3$, the compositions of Arnoux-Rauzy morphisms do not preserve the codings of RIET. Moreover, using Corollary~\ref{C:derivation of IET under AR}, it suffices to prove that the Arnoux-Rauzy morphisms themselves are not RIET preserving.

Let $\ell \in \cA$, let $x$ be a coding of a regular interval exchange transformation associated to the orders $\leq$ and $\preceq$ such that $\ell$ is neither the maximum nor the minimum for $\leq$, and let $y = L_\ell(x)$. The proof is similar for $y = R_\ell(x)$. By Theorem~\ref{T:Fer-Zam}, the pair of orders $\binom{\leq}{\preceq}$ is then the only one satisfying items~\ref{item:IE non crossing edges} and~\ref{item:IE intersection} for all non empty $w \in \cL(y)$. However, it does not satisfy item~\ref{item:IE non crossing edges} for the empty word since $E_y(\eps) = (\{\ell\} \times \cA) \cup (\cA \times \{\ell\})$. Thus $y$ is not the coding of an RIET.
\end{proof}

\section{Conclusion}

Not much is known about the behavior of dendric and eventually dendric words when applying a morphism. The results presented in this paper provide some answers but more general questions remain.
\begin{enumerate}
\item
    It is not known whether the family of eventually dendric words is stable under any morphism. Initial researches seem to suggest that it is the case however, to our knowledge, there is no proof yet. Note that it is closely related to the question of stability under topological factorization asked in~\cite{eventually_dendric}.
\item
    In~\cite{general_case}, the authors gave a complete characterization of whether the image of a dendric word under a given morphism is dendric, for morphisms of a particular shape. It is natural to wonder if such a characterization exists, without initial restriction on the morphism.
\end{enumerate}

\section*{Acknowledgment}

The author is supported by an FNRS Research Fellow grant. The author would like to thank Julien Leroy for discussions about the results of this paper.

\bibliographystyle{plain}
\bibliography{biblio.bib}

\end{document}